\documentclass[a4paper,UKenglish,cleveref,autoref,thm-restate]{lipics-v2021}
\hideLIPIcs

\title{Censorship-Resistant Sealed-Bid Auctions on Blockchains}

\author{Orestis Alpos}{Common Prefix}{}{}{}
\author{Lioba Heimbach}{Category Labs}{}{}{}
\author{Kartik Nayak}{Duke University}{}{}{}
\author{Sarisht Wadhwa}{Duke University}{}{}{}

\authorrunning{O. Alpos, L. Heimbach, K. Nayak, and S. Wadhwa}

\Copyright{Orestis Alpos, Lioba Heimbach, Kartik Nayak, and Sarisht Wadhwa}

\ccsdesc[300]{Security and privacy~Cryptography}

\keywords{sealed-bid auctions, blockchains, censorship resistance, commit-and-reveal}

\relatedversion{}

\graphicspath{{./figures/}}

\usepackage{xurl}
\usepackage{amsfonts,bm,bbm,nicefrac,comment,float,thmtools,thm-restate,multirow}
\usepackage[export]{adjustbox}
\usepackage{xspace}
\usepackage[inline]{enumitem}
\usepackage{mathtools}
\usepackage{algorithm}
\usepackage[noend]{algpseudocode}
\usepackage{eqparbox}
\usepackage{xparse}
\usepackage{pifont}
\usepackage{booktabs}
\usepackage{wrapfig}

\usepackage{tikz}
\usetikzlibrary{arrows.meta,positioning,shapes,fit,calc}
\usetikzlibrary{shapes.geometric}

\usepackage{colortbl}
\usepackage[table]{xcolor}

\renewcommand{\paragraph}[1]{\medskip\noindent\textbf{#1}\enspace}

\newlength{\AlgCommentStart}
\algnewcommand{\LComment}[1]{%
  \Statex
  \hspace*{\AlgCommentStart}%
  $\triangleright$~%
  \parbox[t]{\dimexpr\linewidth-\AlgCommentStart\relax}{\footnotesize #1\strut}%
}

\algrenewcommand{\algorithmiccomment}[1]{%
  \hfill $\triangleright$ \eqparbox{COMMENT\thealgorithm}{\footnotesize #1}%
}

\algnewcommand{\algrule}[1][.2pt]{\par\vskip.5\baselineskip\hrule height #1\par\vskip.5\baselineskip}
\algnewcommand{\algdescription}[1]{\Statex \footnotesize{\textbf{Description: }#1}}
\algnewcommand{\algincentives}[1]{\Statex \footnotesize{\textbf{Incentive Structure: }#1}}

\algnewcommand\algorithmicoffline{\textbf{offline}}
\algdef{SE}[OFFLINE]{Offline}{EndOffline}[2]{%
  \algorithmicoffline\ \textsc{#1}(#2)}{}
\algtext*{EndOffline}

\newcommand{\CN}{\ensuremath{\mathbb{N}}\xspace}

\newcommand{\tx}{\ensuremath{{tx}}\xspace}

\newcommand{\slot}{\ensuremath{s}\xspace}

\newcommand{\txfee}{\ensuremath{{\varphi}}\xspace}
\newcommand{\userset}{\ensuremath{\mathcal{U}}\xspace}
\newcommand{\user}{\ensuremath{{u}}\xspace}
\newcommand{\depositamount}{\ensuremath{{D}}\xspace}

\newcommand{\Hash}{\ensuremath{\textsc{Hash}}\xspace}
\newcommand{\gossipdelay}{\ensuremath{\Delta_g}\xspace}
\newcommand{\syncdelay}{\ensuremath{\Delta}\xspace}

\newcommand{\timestampers}{\ensuremath{\mathcal{P}_{ts}}\xspace}
\newcommand{\numtimestampers}{\ensuremath{n_{ts}}\xspace}
\newcommand{\ftimestampers}{\ensuremath{f_{ts}}\xspace}
\newcommand{\timestamper}[1]{\ensuremath{P_{\textit{ts,#1}}}\xspace}
\newcommand{\timestamperc}{\ensuremath{P_{\textit{ts}}}\xspace}
\newcommand{\ILproposer}{\ensuremath{P_{il}}\xspace}
\newcommand{\ILproposers}{\ensuremath{\mathcal{P}_{il}}\xspace}
\newcommand{\numILproposers}{\ensuremath{n_{il}}\xspace}
\newcommand{\ILthreshold}{\ensuremath{f_{\textit{il}}}\xspace}

\newcommand{\thiding}{\ensuremath{t_h}\xspace}

\newcommand{\adversary}{\ensuremath{\mathcal{A}}\xspace}

\newcommand{\commitment}{\ensuremath{\mathsf{cm}}\xspace}
\newcommand{\clocktime}{\ensuremath{t}\xspace}

\newcommand{\certificate}{\ensuremath{\sigma}\xspace}
\newcommand{\timestamp}{\ensuremath{\tau}\xspace}
\newcommand{\medianTimestamp}{\ensuremath{\tau}\xspace}
\newcommand{\auctiontime}{\ensuremath{t}\xspace}
\newcommand{\numauctions}{\ensuremath{n_{A}}\xspace}
\newcommand{\secret}{\ensuremath{\rho}\xspace}
\newcommand{\depositcontract}{\ensuremath{\mathcal{DC}}\xspace}
\newcommand{\tstart}{\ensuremath{\auctiontime_{\mathrm{start}}}\xspace}
\newcommand{\tend}{\ensuremath{\auctiontime_{\mathrm{end}}}\xspace}
\newcommand{\rootslot}{\ensuremath{\mathsf{root}_{\slot}}\xspace}
\newcommand{\handle}{\ensuremath{h_\user}\xspace}
\newcommand{\memberproof}{\ensuremath{\pi^{\textsf{m}}_\user}\xspace}
\newcommand{\idproof}{\ensuremath{\pi^{\textsf{a}}_\user}\xspace}

\newcommand{\setbids}{\ensuremath{\mathcal{B}}\xspace}
\newcommand{\bid}{\ensuremath{b}\xspace}
\newcommand{\val}[1]{\ensuremath{\mathrm{val}(#1)}\xspace}
\newcommand{\useractivity}{\ensuremath{\mathbf{U}}\xspace}

\newcommand{\auctioneer}{\ensuremath{a}\xspace}

\newcommand{\auctionid}{\ensuremath{\mathsf{AucID}}\xspace}
\newcommand{\itemz}{\ensuremath{\mathsf{item}}\xspace}
\newcommand{\Addr}{\ensuremath{\mathsf{A}}\xspace}

\newcommand{\winningbid}[1]{\ensuremath{b^{#1}_{w}}\xspace}

\newcommand{\bindingProperty}{No Free Bid Withdrawal\xspace}

\newcommand{\variable}[1]{\ensuremath{\mathsf{#1}}\xspace}
\newcommand{\varDeposits}{\variable{deposits}}
\newcommand{\varInactives}{\variable{inactives}}
\newcommand{\varSlashed}{\variable{slashed}}
\newcommand{\varMinDeposit}{\variable{minDeposit}}
\newcommand{\varPath}{\variable{path}}
\newcommand{\varStatement}{\variable{statement}}
\newcommand{\varWitness}{\variable{witness}}
\newcommand{\varCounter}{\variable{ctr}}
\newcommand{\varPerSlot}{\variable{aucsPerSlot}}
\newcommand{\varItem}{\variable{item}}
\newcommand{\varAllAuctions}{\variable{auctions}}
\newcommand{\varNOP}{\variable{NOP}}

\newcommand{\function}[1]{\ensuremath{\textsc{#1}}}
\newcommand{\proveEligibility}{\function{elgPrv}}
\newcommand{\verifyEligibility}{\function{elgVrf}}
\newcommand{\zkprove}{\function{ZkPrv}}
\newcommand{\zkverify}{\function{ZkVrf}}

\newcommand{\submitCom}{\function{submitCom}}
\newcommand{\signCom}{\function{signCom}}
\newcommand{\proveInactivity}{\function{proveNOP}}
\newcommand{\revealBid}{\function{revealBid}}
\newcommand{\updateBid}{\function{processBid}}
\newcommand{\deposit}{\function{deposit}}
\newcommand{\registerDeposit}{\function{registerDeposit}}
\newcommand{\receiveTimestamp}{\function{receivedTimestamps}}
\newcommand{\updateActivity}{\function{updateActivity}}
\newcommand{\updateHistory}{\function{updateHistory}}
\newcommand{\makeOffline}{\function{makeInactive}}
\newcommand{\slashUsr}{\function{slashUsr}}
\newcommand{\withdraw}{\function{withdraw}}

\newcommand{\negl}{\mathsf{negl}}
\newcommand{\ape}{\ensuremath{\epsilon}\xspace}

\begin{document}

\nolinenumbers

\maketitle

\begin{abstract}
Auctions are now central to blockchain markets, settling NFT sales, token launches, DeFi liquidations, and arbitrage opportunities. Each on-chain bid is a public transaction whose inclusion is decided by a single consensus proposer per block. The proposer can observe pending bids, exclude competitors, and submit bids of their own, breaking the fairness guarantees of classical sealed-bid auctions.

To enable latency-sensitive sealed-bid auctions in blockchain settings, we formalize four properties -- each necessary to prevent a concrete attack -- and design a protocol achieving all four: hiding bid contents, existence, and bidder identity until reveal (\emph{Hiding}); counting all timely honest bids and rejecting late adversarial bids (\emph{Simultaneous Release}); preventing silent withdrawal of committed bids (\emph{No Free Bid Withdrawal}); and charging on-chain fees only to winners (\emph{Auction Participation Efficiency}).

Our protocol uses a timestamping oracle (instantiated with a committee of  $2\ftimestampers+1$ timestampers) and a censorship-resistant inclusion predicate (instantiated using a FOCIL-based inclusion list), with only the winning bid settled on-chain. Our construction relies on two zero-knowledge proofs: an \emph{eligibility proof} that anonymously proves deposit membership to the timestamping committee, and an \emph{auction proof} that binds a bid to a specific auction for the inclusion list committee. We implement both using Groth16 over BN254 with Poseidon hashing in arkworks/Rust: the auction proof generates in 13\,ms and verifies in under 1\,ms; eligibility proofs for Merkle trees up to $2^{32}$ bidders generate in 47--159\,ms and verify in ${\sim}1$\,ms. 

Together, this yields a sealed-bid auction primitive practical for high-value, time-sensitive blockchain settings.
\end{abstract}

\section{Introduction}
Auctions decide a growing share of outcomes in blockchain markets, from NFT and token sales to blockspace allocation, DeFi liquidations, and arbitrage capture. Yet implementing auctions on a blockchain introduces a unique structural challenge: every bid passes through a single consensus proposer per slot, who can observe pending bids, decide which competitors to include, and add its own bid to the auction. This proposer monopoly has produced a multibillion-dollar MEV economy~\cite{daian2020flashboys} and is most damaging in latency-sensitive, high-value sealed-bid auctions, where a single proposer's actions in one slot determine the outcome.

The proposer's monopoly creates several distinct failure modes for on-chain sealed-bid auctions. Pending bids sit in the (public) mempool, where the proposer and other observers can read their values, identities, and submission times before inclusion. The proposer can then censor honest bids, condition its own bid on those it sees, and bid with a latency advantage over other bidders. An on-chain auction implemented as a smart contract does not address these failures: the contract sees only the bids the proposer has included, leaving consensus-layer censorship and the proposer's ability to insert its own bid after observing others (i.e., a last-look advantage) untouched. Commit-and-reveal mitigates visibility by hiding bids until a reveal phase, but it introduces a different problem: a bidder can withhold their reveal once others' bids leak~\cite{ens1}. Even absent any manipulation, a fully on-chain auction requires every bid to be submitted as a transaction. These transactions pay inclusion fees regardless of the auction outcome, so losing bidders incur a cost not present in traditional sealed-bid settings. The damage is sharpest in latency-sensitive, high-value auctions such as DeFi liquidations, batch auctions, and time-bounded arbitrage windows. A single slot of proposer-induced delay (around 12 seconds on Ethereum) is enough for asset prices to move materially: a delayed liquidation may settle against a different reference price, a batched auction may exclude a bid that would have moved the uniform clearing price, and an arbitrage opportunity may evaporate before the next slot.

An effective blockchain sealed-bid auction protocol must therefore satisfy four properties: three that defend the auction against adversarial proposers and bidders, and one that aligns its payoff structure with classical auctions.
\begin{enumerate}[leftmargin=*,nosep]
\item \textbf{Hiding.} Until the auction deadline, in an ideal world, no one learns \emph{what} a given user bid, \emph{when} a bid was submitted, or even \emph{whether} a user bid. Without hiding, an attacker (such as a late-arriving bidder, or the next block proposer) can condition their own bid on observed bids: they outbid only when the asset's value exceeds the observed bid and walk away otherwise, effectively diminishing the auction~\cite{sealedbid25}.
\item \textbf{Simultaneous Release.} In an ideal world, the auction counts every honest bid submitted before the deadline and rejects every bid after it. Without this, a proposer can censor all rival bids and submit a near-zero bid of their own, winning the item essentially for free~\cite{sealedbid25}.
\item \textbf{No Free Bid Withdrawal.} Once a user submits a bid, they cannot retract it without cost. Without this, in a commit-and-reveal scheme, a bidder can wait for other bids to leak during reveal and refuse to open their own commitment whenever doing so is unfavorable. Adverse selection then re-enters through the back door.
\item \textbf{Auction Participation Efficiency (APE).} On-chain auctions should match the payoff dynamics of their classical counterparts, i.e., only the winning bidder pays. Without APE, every bidder pays an inclusion fee regardless of outcome, so losing bidders earn negative expected utility from participating.
\end{enumerate}

The first three properties form the security backbone of the platform, while the fourth property is an efficiency property. Together, they recover the structural assumptions under which sealed-bid auctions are analyzed in the classical auction-theory literature. An ideal platform achieving all four properties thus places the auction in the same strategic setting assumed by Vickrey~\cite{vickrey1961counterspeculation}, Myerson~\cite{myerson1981optimal}, Riley and Samuelson~\cite{riley1981optimal}, and Krishna~\cite{krishna2009auction}, so that, conditional on the standard economic assumptions of the value environment also holding (e.g., independent private values, risk-neutral bidders, no collusion), the classical equilibrium and revenue results apply.

Achieving all four properties simultaneously is hard, and prior work has addressed only subsets. Some proposals remove the single-proposer monopoly by restructuring consensus~\cite{sealedbid25} to achieve certain properties; however, they still fail to achieve all the above, notably lacking APE since every bidder must still pay on-chain inclusion fees. Lighter approaches add inclusion~\cite{soispoke2024focil,wadhwa2025aucil} or commit-and-reveal~\cite{Riggs,zeroauction,novakovic2024cryptobazaar} mechanisms on top of existing chains, but lack hiding and cannot guarantee inclusion in narrow, time-bounded windows. None of the previous works are able to reduce the cost to losing bidders and achieve APE. 

\paragraph{Protocol at a glance.}
We close this gap with a lightweight protocol built on two modular primitives: a \emph{timestamping oracle} that attests bid creation times, and a \emph{censorship-resistant inclusion scheme} that admits valid bids before the deadline and rejects late ones. We instantiate the oracle with a $2\ftimestampers+1$ timestamping committee and the inclusion scheme with a FOCIL~\cite{soispoke2024focil}-based mechanism; any other constructions providing the same guarantees would suffice. The protocol runs two preparatory phases ahead of any auction and two per-auction phases; \cref{fig:overviewt} gives an overview. In \emph{Pre-Phase~A}, an auctioneer registers an auction with a deposit contract. In \emph{Pre-Phase~B}, each user makes a one-time deposit and posts a commitment $C_u$ to a private secret $\secret$, enabling later anonymous proofs of registration. The same deposit and commitment cover the user's participation in any number of subsequent auctions; the locked deposit underpins \emph{No Free Bid Withdrawal}, since a user who fails to reveal a committed bid forfeits it.

For each auction, the protocol then runs two phases. In \emph{Phase~1 (timestamping)}, a user gossips a hiding commitment to its bid, paired with an anonymous proof of registration, to the \emph{timestampers}. The user aggregates the timestampers' signed responses into a single timestamp certificate that attests the bid was generated before~$\tend$. The commitment delivers \emph{Hiding}, assuming that the network preserves sender anonymity~\cite{fanti2018dandelion++}.  In \emph{Phase~2 (bid submission)},  the user reveals the bid to an inclusion scheme which ensures censorship-resistance of the reveal. We achieve this by using a modified FOCIL~\cite{soispoke2024focil} based scheme, paired with the verification of timestamp certificate received from the Phase~1. The inclusion scheme ensures that every bid backed by a valid pre-$\tend$ certificate is included and that any bid generated after~$\tend$ is rejected, delivering \emph{Simultaneous Release}. At settlement slot~$S$, only the winning bid is posted on-chain, delivering \emph{APE}.
\begin{figure*}[t]
\centering
\resizebox{0.8\textwidth}{!}{
\begin{tikzpicture}[
  font=\small, >=Stealth,
  actor/.style={draw, thick, fill=white, rectangle,
    minimum width=2.6cm, minimum height=0.65cm, align=center},
  arr/.style={->, thick},
  dotarr/.style={->, thick, densely dashed},
  lbl/.style={font=\footnotesize, color=blue!70!black, align=center},
  phlbl/.style={font=\small\bfseries},
  sublbl/.style={font=\small\itshape, anchor=north west},
  divline/.style={dashed, gray!60, thick},
]

\draw[divline, rounded corners=3pt] (1.5, 0.55) rectangle (14.0, -2.85);
\node[phlbl, rotate=90, anchor=south] at (1.1, -1.15) {At any time};

\node[sublbl] at (1.75, 0.5) {\textit{Pre-Phase A}};
\node[actor] (aucA) at (4.5, -0.25) {auctioneer};
\node[actor] (dcA)  at (11.5, -0.25) {Deposit Contract};
\draw[arr] (aucA.east) --
  node[above, lbl]{$(\auctionid,\,\itemz,\,\tstart,\,\tend)$}
  (dcA.west);

\draw[divline] (1.7,-0.85) -- (14.0,-0.85);

\node[sublbl] at (1.75, -0.9) {\textit{Pre-Phase B}};
\node[actor] (usrB) at (4.5, -1.75) {user};
\node[actor] (dcB)  at (11.5, -1.75) {Deposit Contract};
\draw[arr] (usrB.east) --
  node[above, lbl]{$(\Addr_u,\, C_u,\, \depositamount)$}
  (dcB.west);
\node[lbl, anchor=north] at (4.5, -2.08)
  {generate $\secret$,\enspace$C_u$};

\draw[divline, rounded corners=3pt] (1.5, -3.25) rectangle (14.0, -9.15);
\node[phlbl, rotate=90, anchor=south] at (1.1, -6.2) {For each auction};

\node[sublbl] at (1.75, -3.3) {\textit{Phase 1: Timestamping}};

\node[actor] (usrTS) at (4.5, -4.15) {user $\user$};
\node[actor, minimum width=3.8cm, minimum height=1.3cm]
  (tsN) at (11.5, -4.15)
  {Timestampers $\timestampers$\\[3pt]{\scriptsize\color{gray!60}$(n_{ts}=2f_{ts}+1)$}};

\coordinate (upperL) at ([yshift=5pt]usrTS.east);
\coordinate (upperR) at ([yshift=5pt]tsN.west);
\path (upperL) -- (upperR) coordinate[midway] (upperMid);
\draw[densely dashed, thick] (upperL) -- ($(upperMid)-(0.55,0)$);
\node[font=\small\itshape, fill=white, inner sep=1.5pt] at (upperMid) {gossip};
\draw[dotarr] ($(upperMid)+(0.55,0)$) -- (upperR);
\node[lbl] at ($(upperL)!0.5!(upperR)+(0,0.42)$)
  {$(\handle,\,\commitment,\,\memberproof)$};

\draw[dotarr] ([yshift=-7pt]tsN.west) --
  node[below, lbl]{$(\commitment,\,t_i)_{\certificate_i}$}
  ([yshift=-7pt]usrTS.east);

\node[lbl, anchor=north, align=left] at (4.5, -5.0)
  {wait for $2f_{ts}{+}1$ votes or timeout $2\gossipdelay$\\[1pt]
   pad missing votes, create certificate $\certificate$};

\draw[divline] (1.7,-5.9) -- (14.0,-5.9);

\node[sublbl] at (1.75, -5.95) {\textit{Phase 2: Bid submission}};

\node[actor] (usrBid) at (4.5, -6.85) {user $\user$};
\node[actor, minimum width=3.8cm, minimum height=1.5cm]
  (ilP) at (11.5, -6.85)
  {IL Proposers\\[3pt]{\scriptsize\color{gray!60}(existential honesty)}\\[1pt]{\scriptsize\color{gray!60}running FOCIL}};
\draw[arr] (usrBid.east) --
  node[above, lbl]{$(\tx,\,\certificate)$}
  (ilP.west);

\node[lbl, align=center, anchor=north] (computeLbl) at (11.5, -7.65)
  {compute set of valid bids $\setbids$\\[1pt]compute local winning bid $\bid$};

\node[actor, dashed, minimum width=1.8cm]
  (slotS) at (4.5, -8.6) {slot $S$};
\draw[arr] (computeLbl.south) -- (computeLbl.south |- slotS.east) -- (slotS.east);

\end{tikzpicture}
}
\caption{Overview of the auction protocol. Pre-Phase~A registers an auction; Pre-Phase~B is a one-time user registration covering all later auctions. Phase~1 (timestamping) delivers \emph{Hiding} via a hidden commitment submitted to the timestamping oracle. Phase~2 (bid reveal) delivers \emph{Simultaneous Release} via a censorship-resistant inclusion scheme enforcing timestamp certificates received in Phase~1. At settlement slot~$S$, only the winning bid is posted on-chain, delivering \emph{APE}.}
\label{fig:overviewt}
\end{figure*}

\paragraph{Contributions.}
We summarize our contributions as follows.
\begin{enumerate}[itemindent=0pt,nosep]
  \item \emph{Property formalization.} We refine and formalize four properties for sealed-bid auctions in latency-sensitive blockchain settings: a strengthened hiding property capturing three leakage channels (bid value, bid existence, bidder identity); \emph{Simultaneous Release}, combining short-term censorship resistance with post-deadline exclusion, parameterized by two network-dependent gaps: the lead time an honest user must allow before the deadline, and the window in which a late adversarial bid is still admitted;
    \emph{No Free Bid Withdrawal}; and \emph{Auction Participation Efficiency}. Hiding and auction participation efficiency are posed as ideal targets together with attainable relaxations.

  \item \emph{Protocol design.} We design a protocol using two primitives -- a timestamping oracle and a censorship-resistant inclusion scheme -- which we instantiate using a $2\ftimestampers+1$ timestamping committee and a FOCIL-based inclusion list mechanism. Only the winning bid is settled on-chain; all other auction state lives off-chain.

  \item \emph{Implementation and benchmarks.} We implement the protocol's two zero-knowledge proofs -- the \emph{eligibility proof} $\memberproof$ proving anonymous deposit membership for the timestamping committee and the \emph{auction proof} $\idproof$ binding a handle to a specific auction ID for the IL committee -- using Groth16~\cite{DBLP:conf/eurocrypt/Groth16} over BN254 with Poseidon hashing~\cite{DBLP:conf/uss/0001KR0S21} in arkworks/Rust. The auction proof generates in $13$\,ms and verifies in under $1$\,ms; eligibility proofs for Merkle trees up to $2^{32}$ bidders generate in $47$--$159$\,ms and verify in ${\sim}1$\,ms.
\end{enumerate}

To the best of our knowledge, our protocol is the first to combine all four of these properties in a single design, by separating off-chain bid timestamping and inclusion enforcement from on-chain settlement. \Cref{tab:related-work} shows the comparison with other related schemes.

\section{Model and Notation}
\label{sec:model}

Our goal is to design a protocol that enables traditional sealed-bid auctions to be conducted on-chain while ensuring that every bid is included during the auction.
\subsection{Notation}
In pseudocode, the keyword \textbf{function} denotes a method of a smart contract, i.e., a method that is executed \emph{on-chain}, while keyword \textbf{offline} denotes a block of code executed by users \emph{off-chain}.
If a \textbf{require} statement is not satisfied, the contract reverts \emph{without} state changes.
When referring to zero-knowledge proofs, we denote by $\zkprove(\varStatement, \varWitness)$ a function that returns a proof $\pi$ that \varStatement is true with respect to \varWitness, and by $\zkverify(\varStatement, \pi)$ a function that verifies that $\pi$ is a valid proof for \varStatement.

\subsection{Network and Participants}
\label{subsec:network}

\paragraph{Network Model.} We assume a \emph{synchronous network} where any message sent by an honest party is guaranteed to be received by all other honest parties within a known maximum delay $\syncdelay$. This also means that we assume a $\syncdelay$-synchronized clock for all parties, i.e., the clocks they maintain are within $\syncdelay$ of each other. Synchrony is necessary for our setting: under asynchrony, an honest bid may never be delivered before the auction deadline, trivially breaking censorship resistance and rendering sealed-bid auctions impossible to implement with any meaningful timeliness guarantee.

\paragraph{Communication model.}
Parts of our protocol use an \emph{anonymous broadcast channel}. This channel allows any party to send a message such that (i) the message is delivered to all parties within delay $\gossipdelay$, and (ii) no party -- including a network-level adversary -- can link the message to its sender beyond the prior distribution over potential senders. Replies are routed through the same channel using a message-specific identifier (e.g., a freshly sampled tag included in the original message), without revealing the original sender's identity. Such channels can be instantiated using mix networks~\cite{chaum1981,nym}, anonymous broadcast protocols~\cite{dissent, riposte}, or Tor-based gossip overlays~\cite{tor} with appropriate latency-bandwidth
tradeoffs.

\paragraph{System Participants.} We call any participant in the protocol as a \emph{party}. Our system consists of three main types of parties:
\begin{itemize}[leftmargin=*,nosep]
    \item \textbf{Users ($\userset$):} A set of entities that wish to participate in an auction.
    A user $\user \in \userset$ creates a transaction $\tx$, which includes its contents (i.e., a bid) and a fee $\txfee$. A user uses the gossip network to interact with the timestamping replicas. To represent the value of a bid contained in the transaction $\tx$, we use the notation $\val{\tx}$

    \item \textbf{Timestamping Replicas ($\timestampers$):} A permissioned set of $\numtimestampers = 2 \ftimestampers + 1$  replicas, responsible for providing a timestamp for any transaction they receive. On receiving a transaction (or a blinded transaction, e.g., its hash), a timestamping replica $\timestamperc \in \timestampers$ signs the transaction after attaching the current local clock time. We assume an honest replica timestamps every transaction it receives, and that the adversary controls at most \ftimestampers timestamping replicas.

    \item \textbf{Inclusion List (IL) Proposers ($\ILproposers$):} A permissioned committee of \numILproposers parties, responsible for censorship resistance. Each inclusion list proposer $\ILproposer \in \ILproposers$ contributes to constructing the inclusion list by choosing transactions from the mempool. We assume the adversary controls at most \ILthreshold of the IL Proposers.
    The exact value for \ILthreshold depends on the inclusion-list mechanism used. In this paper, we will be using FOCIL~\cite{soispoke2024focil}, hence we assume existential honesty in the set $\ILproposers$, i.e., $\ILthreshold = |\ILproposers| - 1$.
\end{itemize}

In addition, an \textbf{Auctioneer} ($\auctioneer$) registers each auction by specifying its parameters in the on-chain deposit contract (see \Cref{subsec:prephaseA}). The auctioneer is not trusted for safety: any party can act as an auctioneer, and the protocol's security properties hold regardless of the auctioneer's behavior.

The sets $\timestampers$, $\ILproposers$, and $\userset$ can overlap, but for simplicity we assume in our analysis that the sets are disjoint.
For all of these sets, we call \emph{corrupted} a party controlled by the adversary -- in which case the party can behave arbitrarily -- and \emph{honest} otherwise.

\section{Desired Properties}
\label{sec:properties}

A secure sealed-bid auction protocol in our model must satisfy the following four core properties.
We argue that each is necessary in the sense that violating any one of them enables a concrete attack that undermines the auction:
\begin{itemize}
    \item dropping \emph{Hiding} allows the adversary to condition its bid on observed bids, resulting in violation of the sealed-bid part of the auction;
    \item dropping \emph{Simultaneous Release} allows an adversary to censor all rival bids and win at an arbitrarily low price or the ~\cite{fox2023censorship,sealedbid25};
    \item dropping \emph{\bindingProperty} allows a bidder to selectively abort after observing others' bids, reintroducing adverse selection via the ``free option'' problem~\cite{ens1};
    \item and dropping \emph{APE} transforms the auction into one with entry fees, changing the equilibrium bidding strategy and reducing participation~\cite{mcafee}.
\end{itemize}

\paragraph{Hiding.}
Existing definitions of hiding in sealed-bid auctions typically guarantee only value indistinguishability~\cite{sealedbid25,zeroauction}.
However, this is incomplete for latency-sensitive auctions.
First, the \emph{timing} of bid creation leaks information: in an auction where prices fluctuate, knowing when a bid was generated can reveal its value even if it is cryptographically hidden~\cite{sealedbid25}.
Second, the \emph{existence} of a bid leaks strategic information: if an adversary knows no other bids have been submitted, it can safely underbid.
Third, the \emph{identity} of the bidder can leak the expected bid amount based on the adversary's prior.
Thus, it is not enough to hide the value; one must also obfuscate which auction a bid belongs to and which user submitted it.

We capture these three channels in a single hiding definition.
Let $\textsc{AucSet}(\medianTimestamp)$ denote the set of auctions active at time $\medianTimestamp$, and let $\userset$ denote the set of registered users.

\begin{definition}[$\thiding$-Hiding]\label{def:hiding}
Let $\negl(\cdot)$ denote a negligible function.
For an auction with submission deadline~$\tend$, let $\thiding > \tend$ denote a point in time.
We define the following three games:

\begin{enumerate}[label=(\roman*),leftmargin=*,nosep]

    \item \textbf{Value Indistinguishability.}
    $\adversary$ outputs two transactions $(\tx_0,\tx_1)$.
    A challenger samples $b \xleftarrow{\$}\{0,1\}$ and submits $\tx_b$ to the protocol.
    At time $\thiding$, $\adversary$ outputs a guess $b'$.
    The advantage is
    $\mathsf{Adv}^{\mathsf{ind}}_{\adversary}(\lambda) := | \Pr[b'=b] - \tfrac{1}{2} |.$

    \item \textbf{Existential Obfuscation within $\textsc{AucSet}(\medianTimestamp)$.}
    $\adversary$ outputs a transaction $\tx$ at time $\medianTimestamp$ and two auction IDs $\auctionid_0, \auctionid_1 \in \textsc{AucSet}(\medianTimestamp)$.
    A challenger samples $b \xleftarrow{\$}\{0,1\}$ and submits $\tx$ for auction $\auctionid_b$.
    At time $\thiding$, $\adversary$ outputs a guess $b'$.
    The advantage is
    $\mathsf{Adv}^{\mathsf{exist}}_{\adversary}(\lambda) := | \Pr[b'=b] - \tfrac{1}{2} |.$

    \item \textbf{User Obfuscation within $\userset$.}
    $\adversary$ outputs a transaction $\tx$ and two user identifiers $\user_0, \user_1 \in \userset$.
    A challenger samples $b \xleftarrow{\$}\{0,1\}$ and submits $\tx$ on behalf of user $\user_b$.
    At time $\thiding$, $\adversary$ outputs a guess $b'$.
    The advantage is
    $\mathsf{Adv}^{\mathsf{user}}_{\adversary}(\lambda) := | \Pr[b'=b] - \tfrac{1}{2} |.$
\end{enumerate}

Define $\mathsf{Adv}^{\mathsf{hiding}}_{\adversary}(\lambda) := \max\{ \mathsf{Adv}^{\mathsf{ind}}, \mathsf{Adv}^{\mathsf{exist}}, \mathsf{Adv}^{\mathsf{user}} \}$.
A protocol satisfies $\thiding$-\emph{Hiding} if for every PPT $\adversary$: $\mathsf{Adv}^{\mathsf{hiding}}_{\adversary}(\lambda) \leq \negl(\lambda)$.
\end{definition}

Games (ii) and (iii) are parameterized by the sets $\textsc{AucSet}$ and $\userset$, which bound the adversary's uncertainty. Larger sets yield stronger guarantees.

\begin{remark}[Ideal Hiding]\label{rem:ideal-hiding}
An ideal hiding property would replace \emph{Existential Obfuscation} with full \emph{Existential Indistinguishability} (the adversary cannot tell whether \emph{any} bid was submitted) and \emph{User Obfuscation} with full \emph{User Anonymity} (over all parties, not just registered users).
We leave the question of whether a protocol can achieve this ideal as an open problem.
\end{remark}

\paragraph{Simultaneous Release.}
This property combines censorship resistance and post-auction exclusion. Fox et al.~\cite{fox2023censorship} show the importance of censorship resistance for on-chain auctions -- without it, a malicious bidder can always ensure that honest bidders have lower utility than in classical auctions. Post-auction exclusion states that once the auction completes and bid values may lose privacy, no adversarial bids can be generated, preventing the adversary from inserting bids after that point.

In blockchains, censorship resistance has traditionally been defined as a form of liveness: an honest user's transaction is guaranteed to be included eventually, assuming blocks continue to be produced. For sealed-bid auctions, eventual inclusion is insufficient. Bids must be included \emph{before} the auction deadline~$\tend$ for the bidder to participate. Even a short delay, e.g., a malicious proposer withholding an honest bid and instead including its own lower bid just before the deadline, can prevent the honest bid from being included in the auction.

While prior work has introduced short-term notions of censorship resistance, these guarantees apply only \emph{conditional on the block for that slot being produced}~\cite{sealedbid25}. If the adversary controls the final proposer before the auction deadline~$\tend$ and knows that no subsequent block will be built, it can prevent timely honest bids from being included simply by not producing the final block, while still including its own bid in the preceding block. Auctions therefore require a form of short-term censorship resistance that guarantees inclusion in the auction's input set before~$\tend$, independent of whether a particular block is produced. We refer to this as \emph{Short-Term (ST) Censorship Resistance}. 

In a synchronous network with zero gossip delay and perfectly synchronized clocks, ST Censorship Resistance and Post-Auction Exclusion collapse into a single hard deadline at $\tend$. In practice, however, honest users submit bids through a gossip network, incurring a propagation delay of up to $\gossipdelay$ in addition to clock skew $\syncdelay$, whereas an adversary can submit directly and may submit up to $\syncdelay$ after the deadline. We make this asymmetry explicit by parameterizing the property by two values: $\delta_i$, the lead time an honest user must allow before $\tend$ to be guaranteed inclusion, and $\delta_e$, the late-arrival window in which an adversarial bid may still be admitted.

\begin{definition}[$(\delta_i, \delta_e, \tend)$-Simultaneous Release]
\label{def:simulrelease}
A sealed-bid auction protocol with deadline $\tend$ satisfies \emph{$(\delta_i, \delta_e, \tend)$-Simultaneous Release} if both of the following hold:
\begin{itemize}
    \item \textbf{$\delta_i$-ST Censorship Resistance.} For every auction and every honest bidder $\user$, if $\user$'s bid~$\tx$ is sent to the protocol at least $\delta_i$ before $\tend$, then $\tx$ is included in the auction's input bid set used to determine the outcome.
    \item \textbf{$\delta_e$-Post-Auction Exclusion.} For every auction, if an adversary's bid~$\tx$ is sent to the protocol more than $\delta_e$ after $\tend$, then $\tx$ is not included in the auction's input bid set.
\end{itemize}
\end{definition}

Setting $\delta_i = \delta_e = 0$ recovers the ideal one-shot deadline; the achievable values depend on the network. Concretely, $\delta_i$ must absorb the time an honest bid spends traversing the gossip overlay and obtaining its timestamp certificate, while $\delta_e$ must absorb the clock skew an adversary can exploit at the deadline. Our construction achieves $(\gossipdelay+\syncdelay,\,\syncdelay,\,\tend)$-Simultaneous Release (\Cref{thm:Simul-Release}).

\paragraph{\bindingProperty.}
From the \emph{Hiding} property we have that, until time \thiding, the adversary has no information about the bids that have been submitted. After time \thiding, the protocol may allow bids to be revealed.
Of course, for the \emph{Hiding} property to make sense, the protocol will not allow new bids to be created after this point, ensured by \emph{Simultaneous Release}. However, this is not sufficient -- if the adversary has the ability to \emph{cancel} bids, i.e., to hide their existence or remove them from the protocol, then this would be effectively the same as being able to create bids after \thiding. For example, an adversary could submit many bids and suppress all but the one that yields the most favorable outcome.
Such manipulations arise broadly in commit-and-reveal schemes, where the ability to decide whether a bid becomes visible after observing partial information from others enables strategic withholding or selective abort. Hence, we require a \emph{\bindingProperty} property, complementing the \emph{Hiding} property.

\begin{definition}[\bindingProperty]\label{def:binding}
A sealed-bid auction protocol satisfies the \emph{\bindingProperty} property if the following holds. Consider an execution of the protocol in which all users are honest after time~$\thiding$ and the winning bid is the bid contained in transaction~$\tx$. Now consider a second execution that allows adversarial users  after time~$\thiding$, resulting in a winning bid contained in transaction~$\tx'$. If the bid~$\bid'$ in $\tx'$ is strictly smaller than the bid~$\bid$ in $\tx$, i.e.,  $\bid' < \bid$,  then the user submitting $\tx$ must incur a cost for doing so.

\end{definition}

The intuition is that, if an execution~$e$ exists where transaction $\tx'$ contains the winner bid $\bid'$ and $\bid' < \bid$, whereas the honest execution returns $\bid$ as the winner bid, then the user that submitted bid $\bid$ has effectively withdrawn \bid in execution $e$. In other words, the adversary is not able to change the winner bid after time \thiding to a lower bid $\bid'$ without incurring a cost.

\paragraph{Auction Participation Efficiency.}
We introduce a fourth property -- \emph{Auction Participation Efficiency} -- which captures how closely an on-chain auction emulates the payoff dynamics of its classical counterparts.
In traditional auctions, only the winning bidder(s) pay their bid amounts, while all other participants incur no cost for participation.
In contrast, on-chain implementations require every bidder to submit a transaction containing their bid, each of which must pay an inclusion fee regardless of the auction outcome. This leads to a situation in which most parties end up with a negative utility at the end of the auction; their expected value is inflated by the cost incurred when they do not win. 
This changes the auction to a design with entry fee~\cite{mcafee,krishna2009auction}, which changes the strategy that a bidder must take. 

\begin{definition}[\ape-Auction Participation Efficiency]\label{def:ape}
An on-chain auction protocol satisfies \emph{Auction Participation Efficiency (APE)} if there is at most $\ape$ cost incurred by losing bidders.
\end{definition}

In an ideal world, $\ape = 0$ fully recreates the payoff structure of offline auctions.
However, achieving $\ape = 0$ together with \emph{Hiding} and \emph{\bindingProperty} is impossible: if participating is entirely free, a bidder can submit a hidden bid, observe the outcome after reveal, and selectively withhold its commitment whenever doing so is profitable -- precisely the free-option attack that \emph{\bindingProperty} is designed to prevent.
Hence any protocol satisfying all four properties must impose a non-zero cost on bidders, which may be amortized across multiple auctions.
Our protocol achieves this by requiring a one-time deposit that is returned upon honest participation, so that the amortized cost per auction tends to~$0$.

\section{Background}
\paragraph{Overview of the FOCIL Protocol.}\label{sec:focil-overview}
FOCIL (Fork-choice enforced Inclusion Lists)~\cite{eip-7805} has been proposed as a mechanism to improve transaction-inclusion guarantees on Ethereum by imposing constraints on Ethereum block builders.
An instance of the FOCIL protocol is run for each Ethereum slot, which, according to the Ethereum consensus protocol, has a duration of 12 seconds. Each instance involves a set of \emph{IL Proposers}, a set of \emph{Attesters}, and a set of \emph{Validators}, one of which is the \emph{block builder}. An instance works as follows.

A set of IL proposers is selected from the validator set. Each IL proposer monitors the public mempool and constructs an IL of Ethereum transactions. Each IL can be at most $8$\,KB, and proposers are free to choose their own construction policy -- for example, prioritizing transactions by highest priority fee. Once built, the IL is gossiped to validators and to the block builder for the current slot.
Once a validator receives an IL, it verifies certain conditions: (1) the size of the IL does not exceed the maximum allowed size, (2) the IL is for the current slot, (3) it is signed by an IL proposer of the current slot, and (4) that IL proposer has not sent a different IL for the current slot. If verified, the validator propagates the IL.

The block proposer collects ILs and builds a block consisting of the transactions in \emph{all} valid ILs it has received -- except for transactions that do not fit in the block, which the block proposer may choose to omit. The block is propagated to the rest of the validators.
A set of IL attesters is also chosen from the set of all validators. An attester receives the block for the current slot and only votes for it if it does not omit any transaction from an IL locally seen -- except for transactions that do not fit in the block. A block not signed by sufficiently many attesters is not considered canonical and cannot be part of the canonical chain.

Looking forward, we will use FOCIL as the second sub-protocol in our construction, but with some modifications that we introduce later.

\paragraph{Blockchain Protocol.}\label{sec:pos-overview}
We assume a standard single-leader proof-of-stake (PoS) blockchain with a deterministic slot schedule that supports smart contracts (e.g., Ethereum). The deterministic slot schedule means that slots occur at precise, pre-defined intervals known to all participants, allowing the network to remain synchronized. In each slot a validator acts as the \emph{leader}, collects transactions, executes them, and propose a new block, while other validators attest to finalize it under the protocol's fork-choice rule. This fork-choice incorporates validity conditions, for our construction specified in \Cref{sec:phase2}.

This base protocol ensures consensus on both the order of transactions and the resulting on-chain state. In our construction, we only modify it to require that the inclusion of winning auction bids is enforced through a fork-choice–based mechanism, as described in \Cref{sec:phase2}.

\section{Protocol Description}\label{sec:protocol}
Before presenting the details, we describe the protocol as an abstraction from the point of view of the parties that interact with it, the users $\userset$ and the auctioneer \auctioneer. The former wish to submit a bid to an auction. Each user registers in a pre-phase by locking a deposit and generating a private commitment that will later be used to prove eligibility. The auction protocol will require a user $\user \in \userset$ to act twice, i.e., to send a message, receive responses from the protocol, and then send a second message during the bidding phase. The auctioneer is specified in the underlying PoS  protocol. It registers the auction during a preparatory phase, specifying parameters such as the auction identifier, start and end times, and the item being auctioned. The auctioneer then obtains the bids from our protocol.

Our protocol enables censorship-resistant sealed-bid auctions. It combines two sub-protocols: a timestamping network that provides verifiable bid creation times, and a censorship-resistant inclusion mechanism based on ILs. Together, they achieve the properties presented in \Cref{sec:properties}. At a high level, the protocol operates as follows.

\begin{description}[leftmargin=0pt, labelindent=0pt]

\item\textbf{Pre-Phase A (Auction Registration):} The auctioneer registers the auction $\auctionid$ in the global deposit contract~$\depositcontract$.  During this process, the auctioneer specifies the parameters and reserves settlement capacity on-chain. This ensures unique and verifiable auctions and limits concurrent auctions to guarantee sufficient on-chain settlement capacity.

\item\textbf{Pre-Phase B (User Deposit and Membership Proof):}
Each user $\user$ registers by locking a deposit $D$ in the  global contract~$\depositcontract$ and committing to a private secret $\secret_\user$,  forming $C_\user = \textsc{Hash}(\secret_\user)$ that is stored in a public Merkle-tree registry.
The deposit mechanism will be used to ensure that bidders cannot submit multiple bids and selectively reveal only the most favorable one. Each deposit will remain locked until the user, for each auction it has participated in, either reveals its bid or proves that the bid was smaller than the winning bid.
This will assist in achieving the \emph{\bindingProperty} property.

We remark that a user~$\user$ registers only once with a fixed deposit~$\depositamount$. The deposit remains active until the user withdraws it and enables $\user$ to participate in all auctions under the global contract~$\depositcontract$, as long as it follows the protocol in each auction. 

\item\textbf{Phase 1 (Timestamping):}
For every auction $\auctionid$, the user proves its eligibility to participate by deriving a pseudonymous handle $\handle = \textsc{Hash}(\secret_\user \parallel \auctionid)$ and generating a zero-knowledge \emph{proof of membership}, showing that $\handle$ corresponds (through the secret $\secret_\user$) to an active deposit.
An eligible user obtains a timestamp for a commitment of its bid by interacting with the timestampers $\timestampers$.
A timestamper only signs a commitment if the user provides a handle \handle and a verifying proof.
Collectively, \timestampers certify the creation time of the bid without learning its content, producing a \emph{timestamp certificate} as public evidence of bid existence prior to the auction deadline.
This phase will assist in achieving the \emph{$\thiding$-Hiding} property, ensuring that all bids can be revealed simultaneously at release time while remaining hidden beforehand.

\item\textbf{Phase 2 (Bid Submission and Inclusion):}
Each user now reveals its bid by submitting it, together with the timestamp certificate, to the inclusion list proposers $\ILproposers$, which ensure that all valid bids reaching the network before the auction deadline are eligible for inclusion.
The protocol enforces that the winning bid is included by requiring validators to check an \emph{external validity predicate} -- namely, that the block contains the highest valid bid among all declared inclusion lists. In our Ethereum-based instantiation, this predicate is enforced through the fork-choice rule (see \Cref{sec:phase2}), but the construction generalizes to any consensus mechanism that supports such a predicate.
This phase will assist in achieving the \emph{ST censorship resistance} property.
\end{description}

\subsection{Pre-Phase A: Auction Registration}
\label{subsec:prephaseA}

Before any user deposits or bids can occur, each auctioneer must register its auction in the global on-chain deposit contract.
This pre-phase ensures that all subsequent auctions are uniquely identifiable, at most $\numauctions$-overlapping, i.e., our protocol supports $\numauctions$ parallel auctions.

Each auctioneer~$\auctioneer$ registers an auction in a global contract~$\depositcontract$ that maintains the registry of deposits and auction parameters.
In \Cref{alg:pre-phases} we show the algorithm exposed by \depositcontract for this purpose.
Each registered auction is identified by a unique and strictly increasing \emph{Auction ID} (\auctionid).
The protocol supports at most $\numauctions$ auctions
ongoing at the same time.
For each auction, the auctioneer specifies $
(\auctionid, \itemz, \tstart, \tend),
$
where $\itemz$ denotes the asset being auctioned, and $\tstart < \tend$ define the auction's active window.

We assume a proof-of-stake blockchain protocol with a deterministic slot schedule.
A small fraction of blockspace in each slot is reserved for auction-settlement transactions.
This reserved blockspace unconditionally includes the winning bids on chain; since only a single winning bid must be posted per auction (see \Cref{sec:phase2}) and at most \numauctions auctions run in parallel, the total reserved capacity required per slot is small.
If one or more preceding slots fail to produce a block, the protocol automatically increases the reserved capacity in the next available slot to ensure that all pending auction settlements can be included.

\newcommand{\rcomment}[1]{\unskip\hfill{\scriptsize$\triangleright$~#1}}

\begin{algorithm}[htb!]
  \caption{Pre-Phases A and B: Auctioneer Setup and User Deposit}
  \label{alg:pre-phases}
  \footnotesize
  \begin{algorithmic}[1]
    \Require Auctioneer identifier $\auctioneer$, user \user
    \Ensure Register auctions (at most $\numauctions$-overlap) and users with deposits
  \end{algorithmic}
  \algrule
  \noindent\begin{minipage}[t]{0.49\textwidth}
  \footnotesize
  \begin{algorithmic}[1]
    \Statex \textbf{Pre-Phase A: Auctioneer Registration}
    \Statex \textbf{On-chain functions on contract \depositcontract:}
    \State \textbf{global} $\varCounter \gets 0$
    \State \textbf{global} $\varPerSlot[i] \gets 0, \forall i \in \CN$
    \State \textbf{global} $\varAllAuctions \gets \emptyset$ \rcomment{map, holds auction param.}
    \vspace{3pt}
    \Function{regAuction}{$\auctioneer, \varItem, \tstart, \tend$}
      \For{$t \in [\tstart, \tend]$} \rcomment{max $\numauctions$ per slot}
        \State $\textbf{require } \varPerSlot[t] < \numauctions$
        \State $\varPerSlot[t] \gets \varPerSlot[t] + 1$
      \EndFor
      \State $\auctionid \gets \varCounter$
      \State $\varAllAuctions[\auctionid] \gets (\varItem, \tstart, \tend, S)$
      \State $\varCounter \gets \varCounter + 1$
    \EndFunction
    \vspace{3pt}
    \Function{getAucRecord}{$\auctionid$}:
      \State \textbf{return} $\varAllAuctions[\auctionid]$
    \EndFunction
    \algrule
    \Statex \textbf{Pre-Phase B: User Deposit}
    \Statex \textbf{Local for user $\user$:}
    \Offline{\deposit}{\depositamount} \rcomment{Run by user with address $\Addr_\user$}
      \State $\secret_\user \gets_r \{0,1\}^\lambda$
      \State $C_\user \gets \textsc{Hash}(\secret_\user)$
      \State \registerDeposit($\Addr_\user, C_\user, \depositamount$)
    \EndOffline
    \vspace{3pt}
    \Statex \textbf{On-chain functions on contract \depositcontract:}
    \State \textbf{global} \varDeposits \rcomment{Merkle tree, holds user deposits}
    \State \textbf{global} \varInactives \rcomment{Users made inactive}
    \State \textbf{global} \varSlashed \rcomment{Slashed deposit handles}
    \State \textbf{global} \varMinDeposit \rcomment{Minimum deposit}
    \vspace{3pt}
    \Function{\registerDeposit}{$\Addr_\user, C_\user, \depositamount$}
      \State \textbf{require} sender $= \Addr_\user$ \rcomment{caller is $\Addr_\user$}
      \State \textbf{require} $\depositamount \geq$ minDeposit
      \State record $(\Addr_\user, C_\user, \depositamount)$
      \State insert $(\Addr_\user, C_\user)$ in \varDeposits
    \EndFunction
  \end{algorithmic}
  \end{minipage}\hfill
  \begin{minipage}[t]{0.49\textwidth}
  \footnotesize
  \begin{algorithmic}[1]
    \setcounter{ALG@line}{26}
    \Statex \textit{Called by IL Proposer if $|\useractivity[\user]| \geq \numauctions$ ($\useractivity$ defined in \Cref{alg:bidreveal-winner}):}
    \Function{\makeOffline}{$\Addr_\user$}
      \State \textbf{require} sender $\in \ILproposers$ \rcomment{IL Proposers only}
      \State \textbf{require} ($\Addr_\user, C_\user$) in $\varDeposits$
      \State start $\numauctions$ auction timer
      \State at end of $\numauctions$ timer:
      \State \quad insert ($\Addr_\user, C_\user$) in $\varInactives$
      \State \quad remove ($\Addr_\user, C_\user$) from $\varDeposits$
    \EndFunction
    
    \vspace{3pt}
    \Function{updateActivity}{$\Addr_\user, \Pi$}
      \State \textbf{require} sender $\in \ILproposers \lor $ sender $=\Addr_\user$
      \State \textbf{require} $\varInactives$ contains ($\Addr_\user, C_\user$)
      \State \textbf{call} \updateHistory($\Pi$)
      \State insert ($\Addr_\user, C_\user$) in $\varDeposits$
      \State remove ($\Addr_\user, C_\user$) from $\varInactives$
    \EndFunction
    
    \vspace{3pt}
    \Function{\slashUsr}{$\handle,(\commitment,\memberproof),(\commitment',{\memberproof}^{'})$}
      \State \textbf{require} $\verifyEligibility(\handle,\commitment,\memberproof)$
      \State \textbf{require} $\verifyEligibility(\handle,\commitment',{\memberproof}^{'})$
      \State \variable{slashed}.add($\handle$)
    \EndFunction
    
    \vspace{3pt}
    \Function{slash}{$\Addr_\user, \handle,\Pi$}
      \State verify signature on $\Pi$ from $\Addr_\user$
      \If{$\handle \in \Pi \text{ and }\handle \in \varSlashed$}
      \State remove $(\Addr_\user, C_\user)$ from \varDeposits
      \EndIf
    \EndFunction
    
    \vspace{3pt}
    \Function{\withdraw}{$\Addr_\user,\Pi$}
      \State verify message signed by user
      \State verify signature on $\Pi$ from $\Addr_\user$
      \State \textsc{updateHistory}($\Pi$)
      \State remove $(\Addr_\user, C_\user)$ from \varDeposits
      \State pay $\varMinDeposit$ to $\Addr_\user$
    \EndFunction
  \end{algorithmic}
  \end{minipage}
\end{algorithm}

Note that with Pre-Phase~A, we ensure the uniqueness and public verifiability of auction results, as all participants agree on the existence and the parameters of each auction. We also ensure that the number of concurrent auctions remains limited, so that sufficient capacity exists to unconditionally enforce the inclusion of bids. This foundation enables subsequent phases to rely on consistent auction timelines and prevents equivocation by auctioneers.
In practice, to prevent the auction contract from being dominated by low-value auctions, one could introduce a permissioned set of auctioneers, a reputation-based mechanism, or a dynamic registration fee for creating auctions. We discuss these design choices further in~\Cref{sec:discussion}.

\subsection{Pre-Phase B: Deposit}
\label{subsec:prephaseB}

We now describe how a user~$\user \in \userset$ \emph{registers} for participation in our auction mechanism, in a way that will later allow \user to prove eligibility to participate in an auction. The protocol is shown in \Cref{alg:pre-phases}.

\paragraph{Merkle-Tree-Based Registry.}
As shown in function $\deposit()$ of \Cref{alg:pre-phases}, user $\user$ deposits a deposit amount $\depositamount$ by committing to a private secret~$\secret_\user$, sampled uniformly from $\{0,1\}^\lambda$ and setting:
\[
C_\user = \textsc{Hash}(\secret_\user).
\]
The contract \depositcontract records $(C_\user, \depositamount)$ in a Merkle-tree-based registry. For a slot $\slot$, an observer (user or timestamper) can determine the root of the Merkle tree \rootslot, as well as all registered users and their commitments, $(\Addr_\user, C_\user)$.
This will later serve as public reference for zero-knowledge membership proofs -- users will use \rootslot to create the proof and timestampers will use the same \rootslot to verify the proof.

\paragraph{Activity requirement.}
Each deposit must remain active by submitting either 
    (1) a valid bid commitment, such that the bid is revealed to each timestamper in the reveal phase (winning or losing), or
    (2) a no-operation proof of non-participation,
in every auction until withdrawn. The proofs can be submitted after the auction ends, but being inactive in more than $\numauctions$ auctions  can trigger a $\makeOffline()$ call by an IL Proposer on-chain. Once the user becomes inactive, it must submit a proof of non-participation to regain eligibility. Failure to do so invalidates the deposit, stopping the user from withdrawing the amount. If such proof does not exist (due to misbehaviour), the bidder can never withdraw the deposit, and is effectively slashed.

\paragraph{Withdrawal.}
A withdrawal can be initiated by referencing the deposit record and signing with the key associated with the address $\Addr_\user$ of user \user.
At that time, the deposit of \user must be active, according to the activity requirement conditions,
and \user must have revealed its bids in all auctions in which it has participated.
If \user has not participated in some concluded auctions (these can be at most $\numauctions$, otherwise $\makeOffline()$ would have been called, stopping withdrawals), it can submit a proof of non-participation using $\proveInactivity()$ (\Cref{alg:bidreveal-winner}) for these and for the currently ongoing $\numauctions$ auctions, verifying that it is not currently using its deposit.
The entry for the bidder is instantly removed from the list of bidders, and \user is refunded its deposit.
We show this in function $\withdraw()$.

\paragraph{Inactivity.}
If \user has not revealed its bids (winning or losing; see \Cref{sec:phase2}) for $\numauctions$ auctions, then any IL Proposer can call the function $\makeOffline()$ to start an $\numauctions$-long auction window, after which \user will be removed from the list of bidders and added to an inactive-bidders list.

During this period, \user can call function $\updateActivity()$ to directly submit a proof of non-participation for the auctions it did not participate in, or reveal a bid lower than the winning bid for that auction on-chain (we explain in \Cref{sec:phase2} how the user generates the proof of non-participation $\Pi$ using function $\proveInactivity()$). Note that during the $\numauctions$-auction window, \user can still get its bid timestamped, but if \user is inactive when the auction ends then the bid can never be revealed as the winning bid.

According to the above, users need to submit proof of non-participation or reveal bids for a certain number of past auctions, at most $3\numauctions-1$ ($\numauctions$ auctions before the window starts, $\numauctions$ auctions during the window, and $\numauctions-1$ other auctions that might have started before the end of auctions in the $\numauctions$ auction window). This is re-stated in~\Cref{lem:max-inactivity}.

\paragraph{Slashing users.}
A user can be slashed if a handle $\handle$ and two different bid commitments $\commitment, {\commitment}'$ are given as evidence to function $\slashUsr()$, along with verifying proofs $\memberproof, {\memberproof}'$. This evidence can be provided by timestamperes (see function $\signCom()$ in \Cref{alg:timestamp}). However, the true identity of the user is still unknown at this point, and thus $\handle$ is marked as \varSlashed. Whenever the user tries to use that handle (either to prove \varNOP or reveal its bid), \textsc{slash} is called on the user's address, with requirement that one of the handles that the user sent is in the slashed set.

\paragraph{Security Rationale.}
This pre-phase will later assist in achieving the following properties:
\begin{enumerate*}
\item \emph{Duplicate Prevention and Sybil Resistance}: Each bid handle $\handle$ is unique per auction and per deposit, thus the same deposit cannot be used for multiple bids.
\item \emph{Anonymity}: The ZK membership proofs hide $\Addr_\user$ and $\secret_\user$ while proving that \user has a deposit that is active for the duration.
\item \emph{\bindingProperty}: Users must either bid or explicitly issue a \varNOP proof each round, and no user cannot have more than $3\numauctions-1$ auctions (from~\Cref{lem:max-inactivity}) for which such a proof does not exist.
\item \emph{Withdrawal Safety}: The enforced waiting period ensures no pending bids remain, hence the user can safely withdraw its deposit.
\end{enumerate*}

\subsection{Phase 1: Timestamping}\label{sec:timestamping}
\paragraph{Goal.}
Produce a public, verifiable certificate that a bid commitment existed \emph{before} the auction deadline, without revealing the bid or the bidder's identity.

\paragraph{Inputs and outputs.}
A user $\user$ with address $\Addr_\user$ who registered a deposit in \Cref{alg:pre-phases} with secret $\secret_\user$ wishes to submit a bid transaction $\tx$ for auction $\auctionid$.
The output of the phase is a \emph{timestamp certificate} $\certificate \;=\; \{(\commitment,\clocktime_i)_{\sigma_i} : i \in \timestampers\}$,
together with its \emph{median timestamp} $\medianTimestamp=\mathrm{median}(\{\clocktime_i : i\in \timestampers\})$. In this output, at most $\ftimestampers$ timestamps can be from a Byzantine timestamper, thus at least $\ftimestampers+1$ timestamps exist from honest timestampers. If a timestamp is not received from a timestamper, then it is considered to be $\infty$.

\paragraph{Zero-Knowledge proof of membership.}
User \user must first prove that it has registered a deposit in the registry with root \rootslot. We show this in function $\proveEligibility()$ in \Cref{alg:timestamp}.
Specifically, \user creates a pseudonymous \emph{handle} $\handle$ and a commitment \commitment to its transaction \tx (which contains a bid):$ \handle = \textsc{Hash}(\secret_\user \parallel \auctionid), \quad
  \commitment = \textsc{Hash}(\tx \parallel C_\user)$.
Observe that the handle is unique per auction and deposit, hence the user cannot submit more than one bid for each deposit made in the deposit contract, and that the commitment blinds the actual bid of the user.

Furthermore, $\user$ creates a zero-knowledge proof~$\memberproof$
with public elements (statement) \rootslot, \handle, and \commitment,
and private elements (witness) $\secret_\user$, $\Addr_\user$, the Merkle path from \rootslot to $(\Addr_\user, C_\user)$, \auctionid, and $\tx$,
attesting that
\begin{align*}
    &\handle = \textsc{Hash}(\secret_\user \parallel \auctionid)
    \;\; \land\;\; \exists\; C_\user : \\
      &
     \big(C_\user = \textsc{Hash}(\secret_\user)
      \land(\Addr_\user, C_\user) \in \textsc{merkle}(\rootslot)\;\;\land\; \commitment = \textsc{Hash}(\tx \parallel C_\user) \big)
\end{align*}
Essentially this is a proof of membership in \rootslot. Specifically, the proof demonstrates that:
(i) the user knows a secret $\secret_\user$ which produces the public handle $\handle$,
(ii) the secret $\secret_\user$ corresponds (through the relationship $C_\user = \textsc{Hash}(\secret_\user)$) to a deposit record $(\Addr_\user, C_\user)$ in the registry with root $\rootslot$,
(iii) the commitment \commitment has been created by hashing some transaction \tx and $C_\user$, and
(iv) the deposit is active.
At the same time, no information about $\Addr_\user$ or $\secret_\user$ is revealed.
Note that \auctionid is part of the handle, but \memberproof does not prove anything about it -- we will use this later in \Cref{sec:phase2}.

\paragraph{Protocol for users to obtain timestamp certificate.}
The user sends the handle \handle, bid commitment \commitment, and eligibility proof \memberproof to the timestampers \timestampers through the gossip network, as shown in function $\submitCom()$.
User \user waits to collect $2\ftimestampers+1$ valid responses from parties in \timestampers, and for at most $2\gossipdelay$ time. After that timeout, \user assumes all other parties have sent $\infty$ as their timestamp.
When it receives $2\ftimestampers+1$ responses, \user runs $\receiveTimestamp()$ to create the timestamp certificate \certificate with median timestamp \medianTimestamp, thereby concluding this phase of the protocol.

\paragraph{Protocol for timestampers.}
Upon receiving such request from a user, the timestampers run the function $\signCom()$ of \Cref{alg:timestamp}, which, if successful, signs the commitment and gossips the timestamp certificate to the user.
In order to verify the proof, the timestampers locally run function $\verifyEligibility()$, which contains the verification part of the proof constructed in $\proveEligibility()$. It recovers the Merkle root \rootslot from the slot $\slot$ and verifies that the proof $\memberproof$ is valid for the handle $\handle$ and commitment $\commitment$.

\begin{algorithm}[htb!]
  \caption{Phase 1: ObtainTimestamp}
  \label{alg:timestamp}
  \footnotesize
  \begin{algorithmic}[1]
    \Require User \user (with address $\Addr_{\user}$ and deposit secret $\secret_\user$), transaction \tx, timestamping validators \timestampers.
    \Ensure Obtain certificate \certificate and median timestamp \medianTimestamp.
  \end{algorithmic}
  \algrule
  \noindent\begin{minipage}[t]{0.50\textwidth}
  \footnotesize
  \begin{algorithmic}[1]
    \Statex \textbf{Local for user $\user$:}
    \Offline{\proveEligibility}{$\secret_\user, \Addr_\user, \auctionid, \tx, \slot$}:
      \State $C_\user \gets \textsc{Hash}(\secret_\user) ;\; \handle \gets \textsc{Hash}(\secret_\user \parallel \auctionid)$
      \State $\commitment \gets \textsc{Hash}(\tx \parallel C_\user)$
      \State $\rootslot \gets $ Merkle root for \varDeposits at slot $\slot$
      \State $\varPath \gets $ path from \rootslot to $(\Addr_\user, C_\user)$
      \State $\memberproof \gets$ \zkprove$((\rootslot, \handle, \commitment),(\secret_\user,\Addr_\user,\varPath, \tx))$
      \State \Return $(\handle, \commitment, \memberproof)$
    \EndOffline
    
    \vspace{3pt}
    \Offline{\submitCom}{$\tx, \secret_\user, \Addr_\user, \auctionid, \slot$}:
      \State $(\handle, \commitment, \memberproof) \gets$ \proveEligibility($\secret_\user, \Addr_\user, \auctionid, \tx, \slot$)
      \State $(\handle, \commitment, \memberproof,\slot) \xrightarrow{\textsc{gossip}} \timestampers$
    \EndOffline
    
    \vspace{3pt}
    \Statex \textit{On receiving $2\ftimestampers+1$ responses or waiting for $2\gossipdelay$ time from \submitCom, and padding the timestamps not received with $\infty$:}
    \Offline{\receiveTimestamp}{$\{(\commitment, \clocktime_j)\}_{j \in[2\ftimestampers+1]}$}
      \State $\medianTimestamp \gets \mathrm{median}(\{\clocktime_1,\dots,\clocktime_{2\ftimestampers+1}\})$
      \State $\sigma \gets \{(\commitment,\clocktime_j)_{\sigma_j} : j\in [2\ftimestampers+1]\}$.
      \State \Return $(\tau, \sigma)$.
    \EndOffline
  \end{algorithmic}
  \end{minipage}\hfill
  \begin{minipage}[t]{0.48\textwidth}
  \footnotesize
  \begin{algorithmic}[1]
    \setcounter{ALG@line}{16}
    \Statex \textbf{Local for timestamper $\timestamper{i}$:}
    \Statex \textbf{local} ids $\gets \{\}$ \rcomment{Handles observed by \timestamper{i}}
    \Offline \verifyEligibility{$\handle, \commitment, \memberproof, \slot$}:
      \State $\rootslot \gets $ Merkle root for \varDeposits at slot $\slot$
      \State \textbf{require} $\zkverify((\rootslot, \handle, \commitment), \memberproof)$
    \EndOffline
    
    \vspace{3pt}
    \Statex \textit{Run by $\timestamper{i}$ on receiving $(\handle, \commitment, \memberproof,\slot)$:}
    \Offline{\signCom}{$\commitment, \handle, \memberproof,\slot$}
      \State $\clocktime_i \gets$ local clock time
      \State \textbf{require} $\verifyEligibility(\handle, \commitment, \memberproof, \slot)$
      \If{ids$[\handle] = (\commitment'\neq\commitment,{\memberproof}^{'},\slot')$}
            \State \textsc{slashUsr}$(\handle,(\commitment,\memberproof,\slot),(\commitment',{\memberproof}^{'},\slot'))$
      \Else
      \State ids.add$(\handle, \commitment, \memberproof, \slot)$
      \State $(\commitment,\clocktime_i)_{\sigma_i} \xrightarrow{\textsc{gossip}} \user$ \rcomment{$\sigma_i$ is the sig. of $\timestamper{i}$}
      \EndIf
    \EndOffline
  \end{algorithmic}
  \end{minipage}
\end{algorithm}

\paragraph{Security intuition.}
Because at most $\ftimestampers$ replicas are faulty, the $(\ftimestampers{+}1)$-st order statistic (the median) is within the interval spanned by honest clocks (\Cref{lem:median-robustness}).
This bounds the median value for all users to be between the timestamps provided by two honest parties and prevents early or late skew beyond a threshold.
Since the tuple contains only~$\commitment$, a handle $\handle$ and is sent back a timestamp $\medianTimestamp$ independent of $\auctionid$, the content and identity remain hidden pre-release, while the adversary learns that a bid was placed for one of the $\numauctions$ (cannot distinguish which) at clock time $\medianTimestamp$.

\subsection{Phase 2: Bid-submission}\label{sec:phase2}
\paragraph{Goal.} Ensure that every \emph{timely} bid (i.e., with $\medianTimestamp\le \tend$) is considered for the auction outcome and that the \emph{authenticated winner} cannot be censored within the slot. The IL proposers verify these bids, enforce inactivity penalties. Only the winner is ultimately settled on-chain to achieve \emph{Auction Participation Efficiency (APE)}.

\paragraph{Inputs and Outputs.}
The reveal phase takes as input the user's bid transaction $\tx$ for a given auction $\auctionid$ and the timestamp certificate $(\medianTimestamp, \certificate)$ obtained during Phase~1. To make the proof for inactivity, the user inputs $\auctionid_\ell$, the last auction the user revealed for. On the IL proposer's side, the inputs are the message $(\tx,\auctionid,(\medianTimestamp,\certificate),\Pi)$ received from the user, the local record of valid bids $\setbids$, the user activity mapping $\useractivity$, and the global registries of active and slashed deposits $(\varDeposits,\variable{slashed})$.

At the end of this phase, all honest IL Proposers would have a set of valid bids $\setbids$, on top of which any auction logic can be run. The local winning bids are included in the Inclusion List for the reserved slot $S$ for the auction.

\paragraph{Overview.}
Before presenting the pseudocode in \Cref{alg:bidreveal-winner}, we provide an overview.
Each user who participated in auction $\auctionid$ submits its bid transaction $\tx$ together with its timestamp certificate $(\medianTimestamp, \certificate)$ and a proof of inactivity $\Pi$.
The proof of inactivity serves two purposes: (1) it demonstrates that, for all auctions between the user's last revealed auction $\auctionid_\ell$ and the current one, the user either revealed a valid bid or explicitly declared a no-operation ($\varNOP$); and (2) it enables IL proposers to detect users who attempt to hide activity or submit multiple bids, allowing timestampers to \emph{slash} the associated deposits.

Once received, IL proposers verify the proof, update user-activity records, and include the bid in their local set of bids \setbids. The winning bid -- according to some pre-defined auction logic -- is included in their Inclusion Lists. Only those bids, whose median timestamp $\medianTimestamp$ precedes the auction's end time $\tend$, and that are revealed to the IL Proposer before slot $S$, are considered. The inclusion of the winning bid in all honest ILs guarantees censorship resistance within the slot.

To prove that the bid presented was generated for the particular auction ID, we require the $\user$ to create another zero-knowledge proof~$\idproof$
with public elements (statement) $ \handle, \auctionid, C_\user $
and private elements (witness) $\secret_\user$,
attesting that
\begin{align*}
    \handle = \textsc{Hash}(\secret_\user \parallel &\auctionid)
    \;\; \land\;\;
     \big(C_\user = \textsc{Hash}(\secret_\user)\big).
\end{align*}
Along with this, the user must also send a history of handles it can generate for all auctions it did not participate in. This is done through a \varNOP transaction, which the IL Proposer gossips to the timestampers, so they can check for any repetition of handle. The IL Proposer then updates its local view of user's activity.

\begin{algorithm}[h!]
  \caption{Phase 2: BidReveal}
  \label{alg:bidreveal-winner}
  \footnotesize
  \begin{algorithmic}[1]
    \Require auction $\auctionid$ with end time \tend, transaction $\tx$, IL Proposers $\ILproposers$
    \Ensure IL contains $\winningbid{\auctionid} = \max(\setbids)$ for slot $S$
  \end{algorithmic}
  \algrule
  \noindent\begin{minipage}[t]{0.50\textwidth}
  \footnotesize
  \begin{algorithmic}[1]
    \Statex \textbf{Local for user $\user$:}
    \Statex \textbf{local} $\auctionid_\ell$ \rcomment{Auction last revealed for}
    \Statex \emph{At time $\clocktime \geq \tend + \gossipdelay$ run:}
    \Offline{\revealBid}{$\tx, \auctionid, (\medianTimestamp,\certificate)$}
        \State $\Pi \gets \proveInactivity(\auctionid)$
        \State $\idproof \gets \zkprove((\handle, \auctionid, C_\user), \secret_\user)$
        \For {$\ILproposer \in \ILproposers$}
        \State $(\tx, \auctionid, (\medianTimestamp,\certificate),\idproof, \Pi) \xrightarrow{\textsc{DIRECT}} \ILproposer$
        \EndFor
        \State $\auctionid_\ell \gets \auctionid$
    \EndOffline
    \vspace{4pt}
    \Offline{\proveInactivity}{$\auctionid$}
        \State $\auctionid_{max} \gets \min(\auctionid,\auctionid_\ell+3\numauctions-1)$
        \State $\Pi \gets []$
        \For {$i \in [\auctionid_\ell, \auctionid_{max}]$}
            \State $(\handle^i, \commitment^i, {\memberproof}^{,i}) \gets$$\proveEligibility(\secret_\user, \Addr_\user, i, \varNOP, \slot)$
            \State $\Pi.\mathrm{append} (i, \handle^i, \commitment^i, {\memberproof}^{,i},\varNOP )$
        \EndFor
        \State \Return $(\Pi)_{\sigma_\user}$ \rcomment{signed proof}
    \EndOffline
  \end{algorithmic}
  \end{minipage}\hfill
  \begin{minipage}[t]{0.48\textwidth}
  \footnotesize
  \begin{algorithmic}[1]
    \setcounter{ALG@line}{12}
    \Statex \textbf{Local for IL Proposer \ILproposer:}
    \Statex \textbf{local} $\setbids \gets \{\}, \useractivity \gets \{\}$ 
    \rcomment{\setbids: valid bids, \useractivity:  map from \userset to $\auctionid$s for which user has not revealed.}
    \Statex \emph{On receiving $(\tx, \auctionid, (\medianTimestamp,\certificate), \idproof, \Pi)$ from $\user$:}
    \Offline{\updateBid}{$\tx, \auctionid, (\medianTimestamp,\certificate), \idproof, \Pi$}
        \State \textbf{require} $\zkverify((\handle,\auctionid, C_\user), \idproof)$
        \State \textsc{updateHistory($\Pi$)}
        \For{$(i,\handle^i, \commitment^i, {\memberproof}^{,i},\varNOP) \in \Pi$}
         \If{$\handle^i \in \variable{slashed}$}
         \State call \textsc{slash} $(\Addr_\user, C_\user)$ on-chain, \textbf{abort}
         \EndIf
        \EndFor
        \State \textbf{require} $\Addr_\user \in \varDeposits$ \rcomment{\varDeposits at current \slot}
        \State $(\cdot,\cdot,\cdot, \tend, S) \gets$ \textsc{getAucRecord}$(\auctionid)$
        \If{$\medianTimestamp \leq \tend\;\land\;$clock time $\leq S.\text{time}$}
            \State \textbf{require} verify certificate \certificate
            \State $\setbids.\mathrm{insert}(\tx)$
        \EndIf
        \State $\useractivity[\user].\text{remove}(\auctionid)$
    \EndOffline
    
    \vspace{3pt}
    \Offline{updateHistory}{$\Addr_\user, \Pi$}
    \State Verify signature on $\Pi$
    \For{$(i,\handle^i, \commitment^i, {\memberproof}^{,i},\mathrm{NOP}) \in \Pi$}
     \State $(\handle^i, (\varNOP, \commitment^i, {\memberproof}^{,i})) \xrightarrow{\textsc{gossip}} \timestampers$
     \EndFor
     \State \textbf{wait} for $4\gossipdelay$ \rcomment{Allow \timestamper{} to slash}
     \For{$(i,\handle^i, \commitment^i, {\memberproof}^{,i},\varNOP) \in \Pi$}
         \State $\useractivity[\user]$.remove($i$)
        \EndFor
    \EndOffline
  \end{algorithmic}
  \end{minipage}
  \algrule
  \noindent\textbf{Description: }In slot $S$ of $\auctionid$, each IL Proposer runs the auction logic on \setbids and includes the winner in the local inclusion list. Users that did not reveal get $\auctionid$ added to set $\useractivity$. The proposer for a PoS blockchain only needs to include the winning bid among all local inclusion lists. This is then verified by the attesters in the PoS chain.
\end{algorithm}

\paragraph{FOCIL in our construction.}
We use a construction similar to FOCIL (\Cref{sec:focil-overview}) to obtain censorship resistance against a malicious auctioneer, with the following modifications. First, the protocol is equipped with a validity predicate, and a transaction can only be included in an IL if it satisfies the predicate. A transaction-certificate tuple $(\tx,\medianTimestamp,\certificate)$ is valid if it satisfies the following conditions:
\begin{enumerate*}
  \item All signatures in $(\certificate)$ are valid and correspond to the same commitment $(\commitment)$ and timestamp values are valid.
  \item The median timestamp \medianTimestamp is within the valid window for the current slot; specifically, the network enforces that $\medianTimestamp \leq \auctiontime$, where \auctiontime denotes the cutoff time to submit bids.
 \item The bid satisfies auction rules (e.g., valid fee, bid format).
\end{enumerate*}
If these conditions hold, an IL Proposer adds \tx to its IL, otherwise \tx is ignored. The timestamp certificate \certificate ensures that only transactions created before the cutoff \auctiontime are accepted.

Second, an IL in our construction does not contain all transactions received by its proposer but only the one -- the winning bid. This also implies that we define no limit size for an IL, as it only contains one transaction.
Finally, the rest of the parties relevant in FOCIL, i.e., validators, block builder, and attesters, are assumed to be part of underlying blockchain protocol.
We assume the following fork-choice rule: If a set of rules defined by auctions logic is not followed by the proposer on all declared inclusion lists (which are singleton for each auction ID) of setbids before, then fork choice invalidates the block (i.e., the consensus rejects the block, and the next block will be created as if previous block did not arrive).
Note that we assume that if an IL Proposer declares $\tx_1$ as its local winning bid, and another IL Proposer declares $\tx_2$, then the proposer (and attesters) must be able to identify which transaction is the correct winning bid.

\paragraph{Protocol for users.}
We now present the construction in \Cref{alg:bidreveal-winner}.
The entry point for a user is function $\revealBid()$, which sends $(\tx, \auctionid, (\medianTimestamp,\certificate), \idproof, \Pi)$ to each IL Proposer. Here, $\idproof$ is a proof that the \auctionid is the same as the one used in \handle.
$\Pi$ contains proofs for \emph{previous} auctions, which the user must submit in case it did not participate in some of them.
These proofs can be generated by function $\proveInactivity()$.

The \proveInactivity() function locally remembers the last auction $\auctionid_{\ell}$ for which the user revealed a bid. It then generates proofs for all auction numbers $i$ between $\auctionid_{\ell}$ and the current auction \auctionid (and for at most $3\numauctions - 1$ after $\auctionid_{\ell}$, which represents the maximum number of auctions the user could have received a timestamp for without being considered inactive.).
Each of those proofs is exactly the membership proof for auction $i$, as presented in \Cref{sec:timestamping}, but for a specific transaction, denoted as \varNOP, indicating that user did not participate in auction $i$.

\paragraph{Protocol for IL Proposers.}
An IL Proposer \ILproposer handles the received $(\tx, \auctionid, (\medianTimestamp,\certificate), \idproof, \Pi)$ from a user as shown in function $\updateBid()$.
The first step for \ILproposer is to forward all the proofs of inactivity $\Pi$ to the timestampers (who process them using the function $\verifyEligibility()$ of \Cref{alg:pre-phases} as described in the previous section). Then the IL Proposer waits for the timestampers to process the proofs, which may result in the user being slashed (if the timestampers detect that the user has submitted a different bid commitment for the same handle). If this is the case, the IL Proposer does not further process the received transaction.

The next step for \ILproposer is to verify the proof \idproof
and the certificate \certificate (as described previously). If successful, \ILproposer inserts the transaction \tx into its set of valid bids $\setbids$.

\paragraph{Keeping users' deposit active.}
Finally, we explain how users can make sure their deposit remains active. If a user is actively sending revealed bids to the IL Proposer, then each honest IL Proposer would have the history of handles generated by the user. This implies that the user would be active for the honest IL Proposer. A malicious IL Proposer can trigger the inactivity notice for any user; however, given the $\numauctions$ window in which the user's deposit is still active, the user can send a reveal to any honest IL Proposer that can update user's activity on-chain. The user can also prove its activity or Non Participation directly on-chain.

In case the user goes offline or sparingly participates in auctions, any IL Proposer could make the user's deposit inactive. Being inactive helps the user to not prove its entire history when it comes back online; instead it just has to prove its non-participation in $3\numauctions-1$ auctions since it went offline. This number is derived from the fact that any user can have at most $3\numauctions-1$ auctions that it has not revealed in before being added to inactives set~(\Cref{lem:max-inactivity}).

\subsection{Security Analysis}
\begin{definition}[Security assumptions]\label{def:security-assumptions}
In all the following lemmas and theorems, we assume the following:
\begin{itemize}[leftmargin=*,nosep]
  \item The network is synchronous with a delay of at most $\syncdelay$.
  \item All parties have a $\syncdelay$-synchronized clock.
  \item The gossip network has a delay of at most \gossipdelay and provides user anonymity.
  \item Static corruption with at most \ftimestampers Byzantine replicas in \timestampers, the rest being honest.
  \item Static corruption with existential honesty in \ILproposers
  \item Implicitly assume that the majority of attesters in the Proof-of-Stake chain are honest. This assumption is only required to use FOCIL as a subroutine, and not a strict requirement.
\end{itemize}
\end{definition}
Under the above assumptions, we prove the following theorem statements in \Cref{sec:security-analysis}.

\begin{restatable}[$(\delta_i,\delta_e,\tend)$-Simultaneous Release]{theorem}{censorshipresistance}
\label{thm:Simul-Release}
Assume an auction that spans \Cref{alg:pre-phases,alg:timestamp,alg:bidreveal-winner}, concludes at time $\tend$, and has some pre-defined settlement slot $S$. The auction satisfies \emph{$\delta_i$-ST Censorship Resistance} with $\delta_i = \gossipdelay+\syncdelay$ and \emph{$\delta_e$-Post-Auction Exclusion} with $\delta_e = \syncdelay$ (per \Cref{def:simulrelease}).
\end{restatable}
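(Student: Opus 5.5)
The proof splits into the two clauses of \Cref{def:simulrelease}. Both rest on a median-robustness fact, which I invoke as \Cref{lem:median-robustness} (the paper cites it in the security intuition of \Cref{sec:timestamping}). Every certificate carries $2\ftimestampers+1$ timestamps. At most $\ftimestampers$ of them come from Byzantine replicas, and any missing vote is padded with $\infty$. Hence the median $\medianTimestamp$, the $(\ftimestampers+1)$-st order statistic, lies between the smallest and largest timestamp produced by honest replicas. It is therefore bounded above by the largest honest timestamp and below by the smallest honest timestamp. The two clauses then use the upper and lower bound respectively.

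\emph{$\delta_i$-ST Censorship Resistance.} Let an honest user $\user$ send $\tx$ at local time at most $\tend-\gossipdelay-\syncdelay$.

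\begin{enumerate*}
\item Its eligibility proof verifies at every honest timestamper. This uses completeness of the ZK proof, the fact that an honest user's deposit is active, and the fact that an honest user uses a single commitment per handle, so $\signCom()$ does not slash.
\item Through the gossip channel, every honest timestamper receives $(\handle,\commitment,\memberproof,\slot)$ within $\gossipdelay$.
\item Accounting for $\syncdelay$ clock skew, each honest timestamper therefore records a local time $\clocktime_i \le \tend$.
\item There are at least $\ftimestampers+1$ honest replies, and these arrive within $2\gossipdelay$, before the user's timeout. So at least $\ftimestampers+1$ entries of the certificate are $\le \tend$.
\item This forces the median $\medianTimestamp\le\tend$, regardless of what the Byzantine entries or the $\infty$-padding contain.
\end{enumerate*}

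Next, $\revealBid()$ is run at $\tend+\gossipdelay$ and sends the tuple directly to every IL proposer. By existential honesty, at least one honest $\ILproposer$ receives it before slot $S$. That proposer's checks in $\updateBid()$ all pass: $\idproof$ verifies, $\Pi$ contains no slashed handle, $\Addr_\user\in\varDeposits$, and the certificate is valid with $\medianTimestamp\le\tend$. So it inserts $\tx$ into its $\setbids$.

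Finally, I appeal to the modified FOCIL fork-choice rule. Any block at $S$ must respect the winner computed over all declared ILs, including the honest one. The reserved settlement capacity guarantees the winner fits, and the rule that a missing block carries the obligation to the next slot removes the "don't produce the block" attack. Hence $\tx$ is in the input set that determines the outcome.

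\emph{$\delta_e$-Post-Auction Exclusion.} Let an adversarial bid be created more than $\syncdelay$ after $\tend$. Its commitment $\commitment$ hashes $\tx$, so by collision resistance of \Hash no earlier-signed commitment can be reused. Every honest timestamper therefore signs $\commitment$ only after first seeing it, and does so at local time $>\tend$ because clocks are within $\syncdelay$. There are at least $\ftimestampers+1$ honest entries. Each is either $>\tend$ or absent, and absent entries are padded as $\infty$. Since at most $\ftimestampers$ entries can be adversarially small, the median is $>\tend$. An alternative certificate would require forging an honest signature, which EUF-CMA rules out. So every honest IL proposer rejects the bid, since the condition $\medianTimestamp\le\tend$ fails. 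The validity predicate then prevents any IL from legitimately declaring it, and attesters reject blocks that include it.

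\textbf{Main obstacle.} The delicate step is the timing accounting in the first clause, specifically pinning down exactly why $\gossipdelay+\syncdelay$ suffices. Gossip delay and clock skew interact with the ``sent to the protocol'' reference clock. I would fix all times to the honest user's clock and bound each honest timestamper's clock by $\pm\syncdelay$. A second subtlety is arguing that the block-withholding attack is neutralized by the carried-over reserved capacity.
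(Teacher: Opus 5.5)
Your proposal is correct and follows essentially the same route as the paper. Both clauses come from the median-robustness lemma, combining the honest span with the user drift bound: this gives $\medianTimestamp \le \clocktime_\user+\syncdelay+\gossipdelay \le \tend$ for a timely honest bid, and $\medianTimestamp \ge \clocktime_\user-\syncdelay > \tend$ for a late adversarial one. Your extra steps (all honest replies arriving before the $2\gossipdelay$ timeout, collision resistance binding the commitment to $\tx$, signature unforgeability, and the explicit IL-proposer and fork-choice step) spell out what the paper's proof leaves implicit, without changing the argument.
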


\vspace{-2ex}
\begin{restatable}[$\thiding-$Hiding]{theorem}{relaxedhiding}
Assume an auction that spans \Cref{alg:pre-phases,alg:timestamp,alg:bidreveal-winner}, concludes at time $\tend$. Given $\thiding = \tend+\syncdelay$ (the time at which all users can start revealing their bids), let $\textsc{AucSet}(\timestamp)$ represent the ongoing parallel auctions at clock time  $\timestamp$. Let $\userset = \{\Addr_\user:\;(\Addr_\user,\_) \in \varDeposits\}$ represent the honest users in \varDeposits in the slot right before clock time $\timestamp$. The auction satisfies the following items from the \emph{\thiding-Hiding} property (\Cref{def:hiding}):
  \begin{enumerate*}[label=(\roman*)]
      \item Indistinguishability
      \item Existential Obfuscation within $\textsc{AucSet}(\timestamp)$
      \item User Obfuscation within $\userset$.
  \end{enumerate*}
\end{restatable}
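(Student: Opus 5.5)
The proof goes by a sequence of hybrid arguments, one per game of \Cref{def:hiding}. In each hybrid I replace the honest user's protocol messages by simulated ones that do not depend on the challenge bit $b$. Everything the adversary observes before $\thiding=\tend+\syncdelay$ comes from three sources: the gossip messages $(\handle,\commitment,\memberproof,\slot)$ sent to the timestampers in Phase~1, the signed timestamps $(\commitment,\clocktime_i)_{\sigma_i}$ returned to the user, and any on-chain state (deposits, $\varInactives$, $\varSlashed$). An honest user runs $\revealBid()$ only at $\clocktime\geq\tend+\gossipdelay$, so it sends nothing in Phase~2 before $\thiding$. I would first pin this down: under the clock assumption of \Cref{def:security-assumptions}, an honest user's local time $\tend+\gossipdelay$ is at most $\syncdelay$ away in real time. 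The only pre-$\thiding$ view is therefore the Phase~1 transcript plus public chain state, which is identical across the two worlds because registration happened in Pre-Phase~B before the challenge.

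Next I would model $\Hash$ as a random oracle (or assume a hiding commitment) and the ZK proofs as zero-knowledge with a simulator. Step one replaces $\memberproof$ by a simulated proof on statement $(\rootslot,\handle,\commitment)$. This loses at most the ZK distinguishing advantage, and it removes any dependence on the witness $(\secret_\user,\Addr_\user,\varPath,\tx,\auctionid)$. Step two replaces $\commitment=\Hash(\tx\parallel C_\user)$ by a uniformly random string. This is valid because $C_\user=\Hash(\secret_\user)$ hides a $\lambda$-bit secret, and the adversary queries the oracle on a point including it only with negligible probability. That hybrid alone settles game~(i), since $\tx_0$ and $\tx_1$ enter only through $\commitment$. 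Step three replaces $\handle=\Hash(\secret_\user\parallel\auctionid)$ by a uniformly random string. The argument is the same: $\secret_\user$ is never revealed, so the handle looks fresh and unlinkable to $\auctionid$ and to $C_\user$. After these steps the message is independent of both $\auctionid_b$ and $\user_b$. The timestamps $\clocktime_i$ depend only on the send time, which the adversary chose in game~(ii) and which is the same for both branches in game~(iii). Finally, anonymity of the gossip channel ensures that routing metadata reveals nothing beyond the prior over senders in $\userset$. This gives games (ii) and (iii). Game~(ii) requires both auction IDs to be in $\textsc{AucSet}(\timestamp)$, so that eligibility is equally valid for either. Game~(iii) requires both users to be in $\varDeposits$ at the slot used for $\rootslot$, so that the simulated proof is one an honest prover could have produced in either branch.

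The main obstacle I expect is side channels through the protocol's bookkeeping rather than the cryptography itself. An adversarial timestamper that sees two commitments under one handle calls $\slashUsr()$. The inactivity machinery ($\makeOffline()$, $\Pi$ containing \varNOP handles for earlier auctions) also creates on-chain or gossiped events that might correlate with a user's participation. I would argue that an honest challenged user sends exactly one commitment per handle, so slashing is never triggered. I would also argue that every $\Pi$ message is produced only inside $\revealBid()$ after $\tend$, hence after $\thiding$ for this auction. If $\Pi$ messages from reveals in earlier auctions leak anything, it concerns only those earlier auctions, whose handles are distinct random-oracle outputs; those messages are independent of $b$ and can be given to the adversary in both worlds. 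I would also check that $\slot$, sent in the clear, is chosen by the honest user independently of $b$ (the latest finalized slot), so it does not distinguish $\user_0$ from $\user_1$.

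Summing the losses in the hybrids gives $\mathsf{Adv}^{\mathsf{hiding}}_{\adversary}\leq \mathsf{Adv}^{\mathsf{zk}}+q/2^{\lambda}$, where $q$ is the number of random-oracle queries. This is negligible.
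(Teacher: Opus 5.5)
Your plan has the same skeleton as the paper's proof. The pre-$\thiding$ view is split into $\handle$, $\commitment$, $\memberproof$ and $(\medianTimestamp,\certificate)$. The proof is removed by zero knowledge, and the handle and commitment are handled via properties of $\Hash$. Your hybrids and random-oracle framing only make this more explicit.

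The genuine gap is your step two. Replacing $\commitment=\Hash(\tx_b\parallel C_\user)$ by a random string needs the adversary never to query the oracle at $\tx_b\parallel C_\user$. Nothing prevents that query. First, $\tx_0,\tx_1$ are the adversary's own output. Second, $C_\user$ is public: it is passed in the clear to $\registerDeposit$ and stored next to $\Addr_\user$ in $\varDeposits$, and the paper states that any observer can read all pairs $(\Addr_\user,C_\user)$. The secrecy of $\secret_\user$ does not help, because $\secret_\user$ does not occur in the preimage of $\commitment$. Nothing in the model or in game (i) gives $\tx_b$ entropy unknown to the adversary.

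This is an actual attack, not just a proof gap:
\begin{itemize}
\item In game (i), an adversary that sees $\commitment$ (e.g.\ through a corrupted timestamper) computes $\Hash(\tx_0\parallel C)$ for every registered $C$. It outputs $b'=0$ iff one of them equals $\commitment$, achieving advantage essentially $1/2$.
\item In game (iii), it simply compares $\commitment$ with $\Hash(\tx\parallel C_{\user_0})$ and $\Hash(\tx\parallel C_{\user_1})$.
\end{itemize}
So items (i) and (iii) fail for the protocol as specified, and no arrangement of hybrids can repair this. The paper's proof breaks at the same place. Its reduction $\adversary_1$ asks a ``pre-image resistance'' challenger to distinguish $\Hash(m_0)$ from $\Hash(m_1)$ for known $m_0,m_1$, which is trivial for any deterministic public function. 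Likewise, the Discussion's claim that $\commitment$ is hiding ``given the randomness in $C_\user$'' conflates randomness with secrecy.

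Your argument for game (ii) does survive without step two. There $\commitment$ does not depend on the auction ID and is the same in both branches, so the only $b$-dependence sits in $\handle$ and $\memberproof$, which your steps three and one handle. To obtain (i) and (iii) the commitment itself must change. For example, use $\commitment=\Hash(\tx\parallel C_\user\parallel r)$ with a fresh $r\gets\{0,1\}^\lambda$ added to the witness of $\memberproof$ and opened at reveal. Then your step two goes through verbatim, with $r$ playing the role you assigned to $\secret_\user$.

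Two smaller points also need care.
\begin{itemize}
\item \emph{Timing.} From the fact that $\revealBid()$ runs at local time $\tend+\gossipdelay$ with $\syncdelay$-synchronized clocks, you only get that nothing is sent before $\tend+\gossipdelay-\syncdelay$. This is at least $\thiding=\tend+\syncdelay$ only if $\gossipdelay\geq 2\syncdelay$, which is not assumed. In addition, in game (ii) $\thiding$ must be read relative to the earlier of the two deadlines, since the reveal time depends on $b$.
\item \emph{Inactivity proofs.} Auction IDs follow registration order rather than deadline order, while $\proveInactivity$ covers every ID from $\auctionid_\ell$ up to the current one. So a reveal in an earlier-ending auction can publish a labeled $\varNOP$ handle for a lower-ID auction that is still open. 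Your claim that $\Pi$ only concerns finished auctions therefore needs an additional assumption.
\end{itemize}
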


\vspace{-2ex}
\begin{restatable}[\bindingProperty]{theorem}{binding}
Assume an auction that spans \Cref{alg:pre-phases,alg:timestamp,alg:bidreveal-winner}, concludes at time $\tend$, and has some pre-defined settlement slot $S$. If $\varMinDeposit \geq (3\numauctions-1) \cdot u$, meaning that, by not revealing a commited bid, the adversary pays an amount of at least $u$ per auction, the auction satisfies the \emph{\bindingProperty} property.
\end{restatable}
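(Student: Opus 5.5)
We prove the No Free Bid Withdrawal theorem by contradiction, tracing what happens to the user who would have won in the honest execution. Fix the honest-after-$\thiding$ execution with winning transaction $\tx$ (bid $\bid$) and a second execution $e$ with adversarial users after $\thiding$ whose winner $\tx'$ carries $\bid'<\bid$. Both executions coincide up to $\thiding=\tend+\syncdelay$. So the commitment $\commitment$ for $\tx$, its handle $\handle$, and its timestamp certificate with median $\medianTimestamp\le\tend$ exist in $e$ as well. We then ask why $\tx$ is not the winner in $e$.

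By the $\delta_i$-ST Censorship Resistance part of \Cref{thm:Simul-Release}, together with the fork-choice enforcement of the singleton inclusion lists under existential honesty of $\ILproposers$, any reveal of $\tx$ sent to the IL proposers before slot $S$ enters $\setbids$ of every honest IL proposer. The block at $S$ must then contain $\max(\setbids)\ge\bid>\bid'$. Hence in $e$ the user owning $\handle$ never revealed $\tx$ in time. Post-Auction Exclusion ($\delta_e=\syncdelay$) rules out replacing the outcome with a freshly created bid. Binding of $\commitment=\textsc{Hash}(\tx\parallel C_\user)$, under collision resistance, rules out opening $\commitment$ to a different transaction. So the only remaining deviation is \emph{withholding} the reveal.

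Next we show that withholding costs the user at least $u$. The user's deposit stays usable only if, for auction $\auctionid$, it either reveals or submits a $\varNOP$ entry in $\Pi$ via $\proveInactivity()$. A $\varNOP$ entry requires a second eligibility proof for the same handle $\handle$ (it is deterministic in $\secret_\user$ and $\auctionid$) with a different commitment $\textsc{Hash}(\varNOP\parallel C_\user)\neq\commitment$. \textsc{updateHistory} gossips this to the timestampers. At least $\ftimestampers+1\ge 1$ honest timestampers stored $(\handle,\commitment)$ in ids during Phase 1, and they call $\slashUsr$ within the $4\gossipdelay$ wait. The next use of the handle then triggers \textsc{slash}, removing the deposit. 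If instead the user never submits anything for $\auctionid$, then after $\numauctions$ unrevealed auctions the IL proposers invoke $\makeOffline()$. By \Cref{lem:max-inactivity}, regaining eligibility or withdrawing requires a complete history over at most $3\numauctions-1$ auctions including $\auctionid$, which cannot be produced without the slashing above, so the deposit is permanently locked. Either way the user forfeits at least $\varMinDeposit\ge(3\numauctions-1)\cdot u$. Since one deposit covers at most $3\numauctions-1$ unrevealed auctions, that is at least $u$ per withheld bid.

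The main obstacle is making this accounting rigorous. We must argue that a single deposit cannot amortize withdrawals across more than $3\numauctions-1$ auctions. We also must argue that the adversary cannot escape detection by controlling some timestampers or IL proposers. For the first point we rely on \Cref{lem:max-inactivity} and the cap $\auctionid_\ell+3\numauctions-1$ in $\proveInactivity$. For the second we rely on the honest-timestamper storage of $\handle$ (every certificate needs a vote from $\ftimestampers+1$ honest timestampers for $\medianTimestamp$ to be finite) and on existential honesty of $\ILproposers$, which ensures the on-chain inactivity path is triggered.
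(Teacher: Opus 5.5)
Your proof is correct and follows the paper's argument. By \Cref{lem:phaseB-uniqueness}, a $\varNOP$ for the withheld auction reuses $\handle$ and triggers \slashUsr{} and then \textsc{slash}, so the deposit can never be withdrawn, and \Cref{lem:max-inactivity} spreads $\varMinDeposit$ over at most $3\numauctions-1$ auctions, giving at least $u$ per withheld bid. Your opening reduction (using censorship resistance, post-auction exclusion and commitment binding to show that $\bid'<\bid$ forces the original winner to withhold its reveal) is a step the paper simply assumes; one small slip is that a finite median only guarantees that at least one honest timestamper signed, not $\ftimestampers+1$, though one is all your slashing argument needs.
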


\vspace{-2ex}
\begin{restatable}[Auction Participation Efficiency]{theorem}{ape} Assume an auction that spans \Cref{alg:pre-phases,alg:timestamp,alg:bidreveal-winner}, concludes at time $\tend$, and has some pre-defined settlement slot $S$. The cost to an honest active user if it does not win an auction tends to $0$.
\end{restatable}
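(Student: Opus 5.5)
The plan is to account, cost item by cost item, for everything an honest active user $\user$ pays when it takes part in an auction $\auctionid$ and does not win, and to show that each item is either zero or a one-time amount spread over all the auctions the user joins.

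\textbf{Step 1: classify the actions.} Using \Cref{alg:pre-phases,alg:timestamp,alg:bidreveal-winner}, I will list what a losing user does:
\begin{enumerate*}[label=(\roman*)]
\item the one-time $\deposit()$ and $\registerDeposit()$ in Pre-Phase~B;
\item $\submitCom()$ in Phase~1, which only gossips $(\handle,\commitment,\memberproof,\slot)$ off-chain to $\timestampers$;
\item $\revealBid()$ in Phase~2, which sends $(\tx,\auctionid,(\medianTimestamp,\certificate),\idproof,\Pi)$ \textsc{DIRECT} to the IL proposers, who process it locally in $\updateBid()$;
\item the eventual $\withdraw()$.
\end{enumerate*}
Items (ii) and (iii) create no on-chain transaction. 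The only per-auction computation is off-chain proof generation, so they carry no inclusion fee $\txfee$.

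\textbf{Step 2: only the winner is settled on-chain.} By the modified FOCIL design, each honest IL contains a single transaction, namely the local winning bid for slot $S$. The fork-choice rule only requires the block to contain the winning bid. A losing $\tx$ is therefore never posted, and $\user$ pays no fee for it.

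\textbf{Step 3: the user keeps its deposit.} This is the main obstacle. I must show that an honest user, who reveals in every auction it bids in and sends \varNOP proofs for the others through $\proveInactivity()$, is never slashed or permanently inactivated.
\begin{itemize}[leftmargin=*,nosep]
\item \emph{Not slashed.} An honest user produces at most one commitment per handle $\handle=\textsc{Hash}(\secret_\user\parallel\auctionid)$, so $\slashUsr()$ never receives two distinct valid commitments for its handle. Forging a second one would require breaking the soundness of the eligibility proof or the collision resistance of \textsc{Hash}; this is where the cryptographic assumptions enter.
\item \emph{Not permanently inactive.} A malicious IL proposer may call $\makeOffline()$. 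Existential honesty in $\ILproposers$, together with the $\numauctions$-auction grace window, lets the user's reveal or $\Pi$ reach an honest IL proposer, or let the user call $\updateActivity()$ itself, before removal. \Cref{lem:max-inactivity} bounds the history that must be proven by $3\numauctions-1$ auctions.
\end{itemize}
Hence at withdrawal time the $\withdraw()$ preconditions hold and the full deposit $\depositamount$ is returned.

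\textbf{Step 4: amortize the fixed cost.} The residual cost of a losing user is therefore bounded by the fees of two on-chain transactions, $\registerDeposit()$ and $\withdraw()$, plus at most a constant number of $\updateActivity()$ calls if the user was inactive. Since $\makeOffline()$ only fires after $\numauctions$ missed auctions, the number of such calls is bounded per $\numauctions$ auctions of absence. The plan is to argue that an \emph{active} user makes none of these calls, because honest IL proposers keep its activity updated off-chain. This gives a constant cost $c$ over $k$ auctions, so the per-auction cost is $c/k$, together with the opportunity cost of the locked $\depositamount$, which I treat as outside the model. As $k\to\infty$, $\ape = c/k\to 0$, which is the statement.
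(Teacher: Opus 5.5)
Your proposal is correct and follows the same route as the paper's informal proof. Losing bids never go on-chain, and honest IL proposers counter any malicious $\makeOffline()$ via $\updateActivity()$, so an active user makes no per-auction on-chain calls and its one-time fees amortize to $0$. Your version is somewhat more careful than the paper's: you also count the $\withdraw()$ fee, and you argue explicitly, essentially along the lines of \Cref{lem:phaseB-completeness}, that the honest user is never slashed and gets its deposit back.
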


\section{Benchmarks}

\begin{wraptable}{r}{0.5\textwidth}
\centering
\small
\setlength{\tabcolsep}{4pt}
\caption{Proof generation and verification times (in ms) for the auction and eligibility proofs.}
\label{tab:benchmarks}
\begin{tabular}{@{}l c r r@{}}
\toprule
\textbf{Proof} & \textbf{\# Bidders} & \textbf{Prove} & \textbf{Verify} \\
\midrule
\shortstack[l]{Auction $\idproof$} & --- & $12.9 \pm 1.6$ & $0.9 \pm 0.1$ \\
\midrule
\multirow{4}{*}{\shortstack[l]{Eligibility\\$\memberproof$}}
  & $2^{8}$   & $46.7 \pm 3.3$   & $0.9 \pm 0.0$ \\
  & $2^{16}$  & $72.9 \pm 3.7$   & $0.9 \pm 0.0$ \\
  & $2^{24}$  & $103.4 \pm 16.7$ & $1.0 \pm 0.1$ \\
  & $2^{32}$  & $159.4 \pm 43.2$ & $1.0 \pm 0.2$ \\
\bottomrule
\end{tabular}
\end{wraptable}

In this section, we evaluate the computational cost of the two zero-knowledge
proofs that constitute the core of our protocol: the eligibility proof
$\memberproof$ from Phase~1 (\Cref{sec:timestamping}) and the auction proof
$\idproof$ from Phase~2 (\Cref{sec:phase2}).
We implement both proofs using the Groth16~\cite{DBLP:conf/eurocrypt/Groth16} proof system over the BN254 curve,
as implemented in Rust in the \texttt{arkworks} library,\footnote{\url{https://arkworks.rs}}
and employ the Poseidon hash function~\cite{DBLP:conf/uss/0001KR0S21} for all hash computations.
For the eligibility proof $\memberproof$ we vary the depth of the Merkle tree to support different total numbers of bidders, ranging from $2^{8}$ to $2^{32}$.
For each configuration we report the mean and standard deviation of the proof generation and verification times over 10~iterations with independently sampled random inputs.
The benchmarks were run on a Macbook with the Apple M3 CPU and 24 GB of RAM. The repository for the proofs is \href{https://github.com/OrestisAlpos/proofs-censorship-resistant-auctions}{\underline{github.com/OrestisAlpos/proofs-censorship-resistant-auctions}}.

The auction proof $\idproof$ is lightweight, requiring approx. only $13$\,ms to generate
and $0.9$\,ms to verify.
Even in the most demanding configuration, with a Merkle tree supporting over
four billion bidders, the eligibility proof $\memberproof$ can be generated in
under $160$\,ms and verified in $1$\,ms.

\section{Discussions}

\paragraph{Pure hiding vs.\ cryptographic hiding.}
A distinctive feature of our hiding construction is that it is \emph{purely information-theoretic}: bid values are hidden behind a hash commitment and never encrypted under any shared or threshold key.
This contrasts with threshold-encryption-based approaches (e.g., encrypted mempools~\cite{shutter24,choudhuri2025practical,boneh2025batch}), where hiding relies on a cryptographic threshold: if the decryption committee colludes or is compromised, every submitted bid is \emph{decrypted}, revealing its full content to the adversary.
In our protocol, the timestamping committee is trusted only for \emph{liveness}: if a majority of timestampers collude, the worst outcome is that a user's bid does not receive a valid timestamp and thus cannot participate.
The bid value itself is never at risk of exposure, since the commitment $\commitment = \textsc{Hash}(\tx \parallel C_\user)$ is information-theoretically hiding (given the randomness in $C_\user$) and is never opened to the timestampers.
In other words, a corrupted committee degrades to a \emph{denial-of-service} rather than a \emph{privacy breach} -- a strictly preferable failure mode for high-value auctions.
This is an added advantage over works like~\cite{sealedbid25} which rely on HECC in order to ensure hiding.

\paragraph{Limitations of existing anonymous channels.} We stress that the anonymous broadcast channel is an idealization; no deployed system simultaneously achieves strong sender anonymity, low latency, and a deterministic delivery bound.
Tor-based overlays~\cite{tor} offer reasonable anonymity but add multi-second latency and provide no worst-case delivery guarantee.
Mix networks~\cite{chaum1981,nym} offer stronger anonymity but incur higher latency and are vulnerable to intersection attacks under low traffic.
Protocols with provable anonymity such as Dissent~\cite{dissent} or Riposte~\cite{riposte} achieve strong guarantees but only for small groups and incur latency measured in seconds to minutes.
The choice of instantiation introduces a tradeoff between hiding strength and the parameter~$\gossipdelay$: a lower-latency overlay reduces $\gossipdelay$ but weakens the \emph{User Obfuscation} and \emph{Existential Obfuscation} guarantees (while \emph{Indistinguishability} remains unaffected, since it relies on the commitment scheme and ZK proofs).
For Ethereum-based auctions with 12\,s slots, a gossip delay of $\gossipdelay \in [1\text{\,s},\,4\text{\,s}]$ (achievable with Tor-like overlays or lightweight mix networks) leaves ample time within a slot for both phases.
We view the design of anonymous gossip layers with formal latency bounds as complementary future work.

\paragraph{Strength of obfuscation.}
When $|\textsc{AucSet}| = 1$ or only one user is registered for the auction, existential obfuscation is trivial -- the adversary knows which auction the bid belongs to and/or who the user is.
This is inherent: when only one auction is active, submitting a bid commitment reveals participation regardless of the protocol.
Even then, our definition still guarantees \emph{Indistinguishability} (bid value hidden) and \emph{User Obfuscation} (bidder identity hidden among registered users), which together are strictly stronger than prior commit-and-reveal designs.
In practice, high-value blockchain auctions tend to overlap temporally, making $|\textsc{AucSet}| \gg 1$ the common case.

\section{Related Work}

\begin{table}[t!]
\centering
\caption{Comparison of sealed-bid auction protocols against the four properties. \ding{51}: achieved, \ding{55}: not achieved, $\sim$: partially achieved or with caveats.}
\label{tab:related-work}
\footnotesize
\begin{tabular}{l c c c c}
\toprule
\textbf{Protocol} & \textbf{Hiding} & \textbf{Simul.\ Release} & \textbf{No Free Withdraw} & \textbf{APE} \\
\midrule
Commit-and-reveal~\cite{ens} & $\sim$ & \ding{55} & \ding{55} & \ding{55} \\
FAST~\cite{david2022fast} & $\sim$ & \ding{55} & \ding{51} & \ding{55} \\
Riggs~\cite{Riggs} & $\sim$ & \ding{55} & \ding{51} & \ding{55} \\
Cryptobazaar~\cite{novakovic2024cryptobazaar} & $\sim$ & \ding{55} & \ding{51} & \ding{55} \\
ZeroAuction~\cite{zeroauction} & $\sim$ & \ding{55} & \ding{51}& \ding{55} \\
MCP~\cite{sealedbid25} & $\sim$\textsuperscript{a} & \ding{55}\textsuperscript{b} & \ding{51} & \ding{55} \\
\textbf{This work} & \ding{51}\textsuperscript{c} & \ding{51} & \ding{51} & \ding{51} \\
\bottomrule
\multicolumn{5}{l}{\textsuperscript{a}\footnotesize Value indistinguishability via HECC, but no existential obfuscation or user obfuscation.}\\
\multicolumn{5}{l}{\textsuperscript{b}\footnotesize Single-slot only; multi-slot auctions are vulnerable to cross-slot censorship.}\\
\multicolumn{5}{l}{\textsuperscript{c}\footnotesize Existential obfuscation within $\textsc{AucSet}$; see \Cref{rem:ideal-hiding} for the ideal strengthening.}
\end{tabular}
\end{table}

Auctions are a foundational economic tool, and blockchains provide a novel but adversarial environment for their implementation. We survey prior designs relevant to sealed bids, off-chain execution, and protocol-level resource allocation.

\begin{description}[leftmargin=0pt, labelindent=0pt]
\item\textbf{Open On-Chain Auctions.} Open auctions, particularly English and Dutch formats, are widely deployed on-chain due to their simplicity. English auctions~\cite{milgrom1982theory} suffer from frontrunning and sniping because bids are publicly visible, prompting mitigations such as randomized or extended closing times, e.g., Polkadot's candle auction~\cite{polkadotcandle}. Dutch auctions~\cite{vickrey1961counterspeculation} reduce interaction and settle faster; MakerDAO adopted them for liquidations~\cite{makerdao}. Nevertheless, public bid visibility fundamentally limits ST censorship resistance, even with ordering or timing mitigations.

\item\textbf{Commit-and-Reveal Sealed-Bid Auctions.} Sealed-bid auctions are harder to implement on-chain due to transparency. The standard commit-and-reveal approach hides bids during commitment and reveals them later, enabling formats such as Vickrey auctions~\cite{vickrey1961counterspeculation}. Early deployments, including ENS auctions~\cite{ens}, demonstrated feasibility but also exposed drawbacks such as latency, inclusion fees for losing bidders, and reveal-phase manipulation, leading to subsequent redesigns~\cite{ens1,Braghin2018,KST+20,LZL+21,CLXW22}.
Commit-and-reveal introduces multiple rounds, operational overhead, and incentives to delay or withhold reveals. Even with commitments, known deadlines induce bid clustering and grant proposers temporary inclusion power, enabling selective censorship or timing-based bias. Our work avoids these issues by eliminating on-chain reveals and preventing proposers from influencing bid inclusion.

\item\textbf{Cryptographic and TEE-Based Designs.} Several proposals enhance sealed-bid auctions using cryptography or trusted execution. Encrypted mempools hide transaction contents until inclusion~\cite{fernando2025trx,choudhuri2025practical,boneh2025batch,choudhuri2024mempool,agarwal2025efficiently,bormet2025beat}. FAST~\cite{david2022fast} implements sealed-bid auctions using confidential transactions and penalties, assuming underlying privacy support.
Other systems rely on semi-trusted auctioneers with verifiable computation~\cite{Abulkasim2021,constantinides2021block,desai2019hybrid,galal2018verifiable,sharma2021anonymous}, trusted hardware enclaves~\cite{desai2021secauctee}, or multi-party computation~\cite{Blass2018}. Riggs~\cite{Riggs} employs time-lock puzzles to ensure eventual revelation. Cryptobazaar~\cite{novakovic2024cryptobazaar} and ZeroAuction~\cite{zeroauction} reduce on-chain leakage via encrypted bids and delayed decryption, but still incur ST censorship risk and inclusion costs for losing bids.
These designs improve privacy but typically rely on additional trust assumptions or coordination mechanisms. Our approach avoids trusted parties and records only the winning bid on-chain while mitigating ST censorship.

\item\textbf{Off-Chain Auctions.} Many systems move auctions off-chain to reduce costs. NFT marketplaces such as OpenSea rely on off-chain signed bids~\cite{opensea}, while CoW Protocol uses an off-chain orderbook with competing solvers~\cite{cow}. At the infrastructure level, priority gas auctions~\cite{daian2020flashboys} evolved into private off-chain auctions via Flashbots~\cite{flashbots}, and later into Proposer-Builder Separation (PBS) implemented through MEV-Boost~\cite{mevboost}, effectively treating blockspace as a recurring first-price auction. While off-chain auctions scale efficiently, they depend on intermediaries that can censor or delay bids. Our design removes this reliance while retaining auction efficiency.

\item\textbf{Protocol-Level Censorship Resistance.} Protocol mechanisms such as FOCIL~\cite{soispoke2024focil,eip-7805} introduce committee-based inclusion lists to raise censorship costs but rely on existential honesty and remain vulnerable to block-filling. AUCIL~\cite{wadhwa2025aucil} strengthens these guarantees in a rational model using auctions over inclusion lists, but still grants proposers a latency advantage. Multiple Concurrent Proposers (MCP)~\cite{sealedbid25} provide stronger ST censorship resistance via architectural changes, at the cost of higher complexity and deployability constraints. Our work occupies a middle ground, offering auction-level ST censorship resistance without modifying the underlying consensus protocol.

\item\textbf{Rational Auction Protocols.} A related line of work studies sealed-bid auctions under rational adversaries. Ganesh et al.~\cite{ganesh2022secure,ganesh2024secure} design cryptographic protocols for first-price and Vickrey auctions that ensure incentive compatibility via punishments and commitments. These works focus on strategic correctness among rational bidders, but do not address blockchain-specific challenges such as ST censorship or timing leakage. In contrast, our construction targets permissionless deployment and directly mitigates inclusion-time censorship and bid-existence leakage using timestamp certificates and inclusion lists, while minimizing losing-bidder fees through our APE objective.
\end{description}

\section{Conclusion and Future Work}\label{sec:discussion}

Our protocol fits into a broader effort to make sealed-bid auctions practical on blockchains.
Conventional sealed-bid designs do not address ST censorship: because the block proposer has monopoly control for the duration of its slot, a malicious proposer can suppress timely honest bids or exploit its latency advantage to react strategically to information that other bidders were not able to observe.
More recent approaches remove the proposer monopoly by employing multiple concurrent proposers, but doing so demands substantial modifications to consensus, introduces considerable protocol complexity, and imposes significant on-chain load.

Our work takes a different direction.
We describe a lightweight, consensus-adjacent mechanism that achieves the strengthened properties
required for high-value, time-sensitive auctions. In particular, our protocol provides (i) $(\gossipdelay+\syncdelay,\,\syncdelay,\,\tend)$-simultaneous release, (ii) $\thiding$-hiding, (iii) no free bid withdrawal, and (iv) auction participation efficiency. Importantly, because most of the mechanism operates off-chain, it places minimal demand on blockspace, though this necessitates a more involved deposit mechanism and eligibility proofs.

\paragraph{Future Work.}
First, a key direction for future work is designing incentive mechanisms for timestampers. Our protocol currently relies on an honest majority among timestampers to produce correct and timely certificates; replacing this assumption with explicit economic incentives would provide stronger robustness and make the timestamping layer secure even under strategic behavior. A related question exists for IL proposers, though it is less pressing in our setting -- since we require only existential honesty -- and prior work has already made progress on incentivizing IL proposers~\cite{wadhwa2025aucil}.

Second, our protocol does not yet include a Sybil-resistance mechanism for auctioneers to prevent malicious parties from spamming many low-value auctions. This could be addressed through a reservation fee mechanism for slot reservations, which would also allow auctioneers to explicitly reserve capacity only for the slots in which their auctions run. More generally, fee or deposit-based eligibility rules, or the simpler option of treating auctioneers as a permissioned set, remain important directions for future work.

Finally, a full economic analysis remains open. This includes determining appropriate penalties for non-reveals, incentives for block proposers to include auction-settlement transactions, and -- if timestampers are made rational rather than honest -- the corresponding incentive structure for correct and timely timestamping.

\bibliography{refs}

\appendix

\section{Security Analysis}\label{sec:security-analysis}

\subsection{Intermediate results}
We first prove some intermediate results, which we then use to prove the main theorems for our construction.

\begin{lemma}[Median robustness of timestamp certificate]\label{lem:median-robustness}
Let $\timestampers$ be a committee of timestampers with at most $\ftimestampers$ corrupted, where $|\timestampers| = 2 \ftimestampers + 1$. Let $\mathcal{T}=\{\timestamp_i : i\in \timestampers\}$ be the set of timestamps that the user collects on the commitment $\commitment$,
where \emph{at most} $\ftimestampers$ timestamps of $\mathcal{T}$ may be adversarially chosen.
Given $\gossipdelay$ as the maximum delay of the gossip network and $\syncdelay$-synchronized clocks, and assuming that the user treats any missing stamps at timeout as $\infty$.
Let $\medianTimestamp$ be the median of $\mathcal{T}$, and $\mathcal{T}_{\mathsf{hon}}$ represents the subset received from honest parties.
Then:
\begin{enumerate}[leftmargin=*]
  \item \textbf{Honest span.}
  \[
    \min (\mathcal{T}_{\mathsf{hon}}) \le \medianTimestamp \le \max (\mathcal{T}_{\mathsf{hon}}),
  \]
  In particular, the adversary cannot move the median outside the honest span.
  \item \textbf{User drift bound.} If $t_\user$ represents the time at which user sends its transaction
  \[
    \max (\mathcal{T}_{\mathsf{hon}}) - t_\user \le \gossipdelay+\syncdelay,
  \]
  \[
   \min (\mathcal{T}_{\mathsf{hon}})-t_\user \ge -\syncdelay,
  \]
  \item \textbf{Drift bound.}
  \[
    \max (\mathcal{T}_{\mathsf{hon}}) - \min (\mathcal{T}_{\mathsf{hon}}) \le 2\syncdelay+\gossipdelay,
  \]
  hence $\medianTimestamp$ deviates from any honest stamp by at most 2$\syncdelay+\gossipdelay$.

\end{enumerate}
\end{lemma}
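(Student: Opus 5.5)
We prove the three items in order; each is an elementary order-statistic or timing argument.

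\textbf{Honest span.} Sort the $2\ftimestampers+1$ entries of $\mathcal{T}$ (missing stamps padded with $\infty$). The median is the $(\ftimestampers+1)$-st smallest entry. Since at most $\ftimestampers$ entries are adversarial, at least $\ftimestampers+1$ entries lie in $\mathcal{T}_{\mathsf{hon}}$.
\begin{itemize}
\item If $\medianTimestamp < \min(\mathcal{T}_{\mathsf{hon}})$, then all $\ftimestampers+1$ entries at or below the median would be strictly below every honest stamp. They would all have to be adversarial, which is a contradiction.
\item Symmetrically, if $\medianTimestamp > \max(\mathcal{T}_{\mathsf{hon}})$, the $\ftimestampers+1$ entries at or above the median would all be adversarial.
\end{itemize}
The subtle point is the padding with $\infty$. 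Here I will first use item 2 to show that, by synchrony, every honest timestamper's response reaches the user within the $2\gossipdelay$ timeout. Hence no honest entry is padded with $\infty$, and only adversarial (or adversarially withheld) entries can be $\infty$. This keeps $\mathcal{T}_{\mathsf{hon}}$ finite and of size at least $\ftimestampers+1$.

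\textbf{User drift bound.} For each honest timestamper $i$ we bound $\clocktime_i$ in terms of $t_\user$, the user's local send time.
\begin{itemize}
\item \emph{Upper bound.} The message reaches $i$ within $\gossipdelay$ of real time. The clocks of $i$ and $\user$ differ by at most $\syncdelay$. So the time $i$ reads when it stamps (line ``$\clocktime_i \gets$ local clock time'' in $\signCom$) is at most $t_\user+\gossipdelay+\syncdelay$.
\item \emph{Lower bound.} The message cannot arrive before it was sent, so $\clocktime_i \ge t_\user - \syncdelay$.
\end{itemize}
Taking the maximum and the minimum over honest $i$ gives both inequalities. The same argument shows the honest reply returns within $2\gossipdelay$ (one gossip hop each way, in real time), which justifies the claim used in item 1 that honest stamps are never replaced by $\infty$. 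The clock-skew convention needs care: I will interpret the timeout as measured on the user's clock and assume the reply is sent promptly upon stamping.

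\textbf{Drift bound.} Subtracting the two inequalities of item 2 gives $\max(\mathcal{T}_{\mathsf{hon}})-\min(\mathcal{T}_{\mathsf{hon}}) \le \gossipdelay+2\syncdelay$. By item 1, $\medianTimestamp$ lies in $[\min(\mathcal{T}_{\mathsf{hon}}),\max(\mathcal{T}_{\mathsf{hon}})]$, so $\medianTimestamp$ differs from any honest stamp by at most the width of this interval.

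\textbf{Main obstacle.} I expect the main difficulty to be making rigorous the interaction between the $\infty$-padding and the honest-majority count in item 1. This relies on item 2's delivery argument together with a precise reading of ``$\syncdelay$-synchronized'' (clock offset versus real time), so I would state those modelling conventions explicitly up front.
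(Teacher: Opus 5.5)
Your proposal is correct and takes essentially the same route as the paper: a pigeonhole argument on the $(\ftimestampers+1)$-st order statistic for the honest span, clock skew plus one gossip hop for the user drift bound, and subtraction for the drift bound. The one addition is your explicit argument that honest replies arrive within the $2\gossipdelay$ timeout, so no honest stamp is padded with $\infty$. The paper's proof takes this for granted, but it is genuinely needed for the upper half $\medianTimestamp \le \max(\mathcal{T}_{\mathsf{hon}})$; the lower half holds regardless of padding.
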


    \begin{proof}[Proof. Honest span]
        We prove this statement by Pigeon Hole Principle. There exist $\ftimestampers$ values before and after the median timestamp. Since we have $\ftimestampers+1$ honest timestamps, either the median timestamp is honest, or at least one of the timestamps before or after the median is filled with an honest timestamp. Thus $\min(\mathcal{T}_{\mathsf{hon}}) \leq \medianTimestamp$ and $\max(\mathcal{T}_{\mathsf{hon}}) \geq \medianTimestamp$.
    \end{proof}
    \begin{proof}[User drift bound]
        If the user sends the transaction at their local clocktime $\clocktime_\user$, then from the perspective of an honest timestamper ($\timestamper{i}$), the sending time of the transaction (accounting for the clock drift) must be \[\clocktime_\user-\syncdelay \leq \clocktime_\user^{\timestamper{i}} \leq \clocktime_\user+\syncdelay.\]
        The time it receives the transaction and assigns it a timestamp is thus \[\clocktime_\user-\syncdelay \leq \timestamp_{i} = (\clocktime_\user^{\timestamper{i}} +\gossipdelay) \leq \clocktime_\user+\syncdelay+\gossipdelay.\] Thus, for all honest $\timestamper{i}$, the assigned timestamp is given by the previous equation.
        In other words,
        \begin{equation}
            \label{eq:userdriftupperbound}
            \max(\mathcal{T}_{\mathsf{hon}})\leq \clocktime_\user+\syncdelay+\gossipdelay
        \end{equation}
        and
        \begin{equation}
        \label{eq:userdriftlowerbound}
        \min(\mathcal{T}_{\mathsf{hon}}) \geq \clocktime_\user-\syncdelay
        \end{equation}
        which proves the statement.
    \end{proof}
    \begin{proof}[Drift bound]
        Subtracting \Cref{eq:userdriftupperbound,eq:userdriftlowerbound}, we get
        \begin{equation*}
        \label{eq:driftbound}
        \max(\mathcal{T}_{\mathsf{hon}})-\min(\mathcal{T}_{\mathsf{hon}}) \leq 2\syncdelay+\gossipdelay
        \end{equation*}
    \end{proof}

\begin{lemma}[Completeness of membership proofs]\label{lem:phaseB-completeness}
  An honest user can always create a statement $(\rootslot, \handle, \commitment)$ and verifying membership proof \memberproof for any auction $\auctionid$.
\end{lemma}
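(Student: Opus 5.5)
The proof is constructive: we exhibit the witness an honest user holds and check that it satisfies every conjunct of the relation proved by \memberproof. We then appeal to completeness of the underlying proof system (Groth16 in our instantiation), which guarantees that $\zkprove$ run on a true statement with a valid witness outputs a proof accepted by $\zkverify$. Concretely, fix an honest user \user who ran $\deposit()$ in Pre-Phase~B. It therefore holds $\secret_\user$, $\Addr_\user$ and $C_\user=\textsc{Hash}(\secret_\user)$. Since $\registerDeposit()$ only reverts on a wrong sender or an insufficient deposit, and an honest user calls it correctly, the pair $(\Addr_\user,C_\user)$ is inserted into \varDeposits.

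Fix an arbitrary auction \auctionid and a transaction \tx; the latter may be a real bid or the special transaction \varNOP used in $\proveInactivity()$. The user follows $\proveEligibility()$ and sets $\handle=\textsc{Hash}(\secret_\user\parallel\auctionid)$ and $\commitment=\textsc{Hash}(\tx\parallel C_\user)$. It reads \rootslot for a slot \slot at which $(\Addr_\user,C_\user)$ is in \varDeposits, and extracts the Merkle path. Each conjunct then holds by construction. The handle equation holds by definition, and so does $C_\user=\textsc{Hash}(\secret_\user)$. Merkle membership holds because the leaf is present at slot \slot and the path is computed from the public tree. The commitment equation holds by definition. The statement $(\rootslot,\handle,\commitment)$ is therefore true, $(\secret_\user,\Addr_\user,\varPath,\tx)$ is a valid witness, and completeness gives a verifying \memberproof. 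An honest timestamper running $\verifyEligibility()$ recomputes the same \rootslot from the same public \slot, so the proof verifies for it as well.

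The main obstacle is the activity side-condition (iv): we must argue that the honest user's leaf is in \varDeposits at a usable slot. An honest user could have been moved to \varInactives through $\makeOffline()$, which a malicious IL proposer may trigger. The plan is to handle this case as follows. First, $\makeOffline()$ only takes effect after an $\numauctions$-auction window. Second, during that window the user can call $\updateActivity()$ with a valid $\Pi$, which reinserts its leaf. Third, the user can always build this $\Pi$ by the same construction argument applied to \varNOP, since by the activity requirement an honest user either revealed its bid or has a non-participation proof for every auction. Because the user may also pick any recent slot \slot at which its leaf is present, a suitable root always exists. The proof does not bind \auctionid beyond the handle, so no further constraint arises, and the argument holds for every \auctionid.
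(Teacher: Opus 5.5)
Your route is the paper's: exhibit the witness $(\secret_\user,\Addr_\user,\varPath,\tx)$, check each conjunct, appeal to completeness of the proof system, and argue that the honest user's leaf $(\Addr_\user,C_\user)$ is still in \varDeposits. Like the paper, you handle a malicious $\makeOffline()$ call through the $\numauctions$-auction window and $\updateActivity()$.

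\textbf{The gap.} You treat $\makeOffline()$ as the only way the leaf can leave \varDeposits. But \depositcontract also removes it in $\textsc{slash}()$ and in $\withdraw()$. Until those two paths are excluded, your closing claim that a recent slot containing the leaf (hence a suitable \rootslot) always exists is not justified. The paper's proof explicitly lists these three removal paths and rules out each one.

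\textbf{Closing it.} $\withdraw()$ is immediate, since it requires a message signed by the user itself. $\textsc{slash}()$ needs a real argument, because it is exactly the path an adversary would use to evict an honest user (e.g.\ a corrupted timestamper invoking $\slashUsr()$). You need three facts.
\begin{enumerate}
\item \handle can enter \varSlashed only through $\slashUsr()$, which requires two verifying eligibility proofs for the same handle with distinct commitments.
\item An honest user never attaches two different commitments to one handle. This requires reading $\proveInactivity()$ as ranging only over auctions strictly between $\auctionid_\ell$ and the current one, as the prose suggests; the closed range in the pseudocode would reuse handles.
\item By knowledge soundness of the proof system and preimage resistance of \Hash, the adversary cannot produce a verifying proof for $\handle=\Hash(\secret_\user\parallel\auctionid)$ without knowing $\secret_\user$.
\end{enumerate}
Plain soundness does not suffice for the third fact. $C_\user$ is public on-chain, so the statement $(\rootslot,\handle,\commitment')$ is true for every $\commitment'=\Hash(\tx'\parallel C_\user)$; only extractability rules out a forged proof. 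With this case added, your proof is complete. The paper dispatches the case in one sentence and implicitly relies on the same reasoning.
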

\begin{proof}
  An honest user \user registers by choosing $\secret_\user$ and computing $C_\user$,
  which then gets added in the \varDeposits variable of \depositcontract (see $\registerDeposit()$ in \Cref{alg:pre-phases}).
  The deposit $(\Addr_\user, C_\user)$ can only be removed from \varDeposits through function $\makeOffline()$, $\textsc{slash}()$ or $\textsc{withdraw}()$ (see \Cref{alg:pre-phases}).

  For an honest \user, no honest IL Proposer will call $\makeOffline(\Addr_\user)$,  because \user reveals its bid commitments or participates in all auctions using a \varNOP bid. A malicious IL Proposer \emph{can} call $\makeOffline(\Addr_\user)$ against honest \user, however, this gives the honest user (or any other honest IL Proposer) a $\numauctions$ window in which it can use $\updateActivity()$ to avoid becoming inactive.

  Moreover, $\textsc{slash}()$ or $\textsc{withdraw}()$ can only be called if either the user uses a $\handle$ twice or withdraws its own funds, which the honest does not do.

  Hence, deposits $(\Addr_\user, C_\user)$ of honest \user will remain in the \varDeposits. This means that \user can create the \varPath variable from \rootslot (the root of the merkle tree that contains all deposits in \varDeposits) to $(\Addr_\user, C_\user)$ for some slot $\slot$ (see $\proveEligibility()$ in \Cref{alg:timestamp}).
  Moreover, \user can create the handle $\handle = \textsc{Hash}(\secret_\user \parallel \auctionid)$ for any auction $\auctionid$ and commitment $\commitment = \textsc{Hash}(\tx \parallel C_\user)$ for its transaction \tx.
  Hence, \user can compute a valid witness $(\secret_\user,\Addr_\user,\varPath, \tx)$ for the statement $(\rootslot, \handle, \commitment)$ and, from the completeness property of the zero-knowledge proof system, a verifying membership proof \memberproof for any auction $\auctionid$.
\end{proof}

\begin{lemma}[Max Inactivity]
    \label{lem:max-inactivity}
    A user can get a timestamp certificate for at most $n_c = 3\numauctions-1$ auctions without revealing the corresponding bids.
\end{lemma}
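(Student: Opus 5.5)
We will prove \Cref{lem:max-inactivity} by a counting argument over three consecutive windows of auctions. The first step fixes the point at which unrevealed auctions start to matter. Consider the earliest moment at which some IL Proposer is entitled to call $\makeOffline(\Addr_\user)$. By the activity requirement and the precondition $|\useractivity[\user]| \geq \numauctions$ in \Cref{alg:pre-phases}, this moment occurs once $\numauctions$ concluded auctions exist for which \user neither revealed a bid nor supplied a \varNOP proof. Hence, before this trigger, \user can hold at most $\numauctions$ unrevealed timestamp certificates for concluded auctions without the trigger becoming available. These auctions form the first block.

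The second step uses the protocol's rule that, once $\makeOffline()$ is called, the deposit $(\Addr_\user, C_\user)$ stays in \varDeposits for an $\numauctions$-auction timer. It is moved to \varInactives only when that timer expires. During the timer, $\rootslot$ still contains \user's leaf, so \user can still produce verifying proofs $\memberproof$ via $\proveEligibility()$ and $\verifyEligibility()$. It can therefore obtain certificates for auctions in this window, which contributes at most $\numauctions$ further unrevealed certificates. In addition, the window may end while some auctions that began inside it are still running. Because Pre-Phase~A enforces that at most $\numauctions$ auctions overlap at any slot (the \varPerSlot counter in $\textsc{regAuction}$), and one of these overlapping slots is already accounted for by the auction that closes the window, at most $\numauctions-1$ such straddling auctions can have had commitments timestamped while \user was still in \varDeposits.

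The third step shows that after the deposit moves to \varInactives, no further certificate can be obtained. Honest timestampers form the majority, and they verify $\memberproof$ against the root of \varDeposits, which no longer contains $(\Addr_\user, C_\user)$. By soundness of the ZK proof and collision resistance of \textsc{Hash}, \user cannot produce a valid membership proof. Consequently at most $\ftimestampers$ (corrupt) timestamps are available, and any missing timestamp counts as $\infty$. By \Cref{lem:median-robustness} (honest span), the median is then $\infty$, so no usable certificate results. The handle uniqueness enforced by $\slashUsr()$ also rules out obtaining several certificates per auction on one deposit. Adding the three blocks gives $\numauctions + \numauctions + (\numauctions - 1) = 3\numauctions - 1$.

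The main obstacle is the second step, specifically arguing tightly that straddling auctions contribute only $\numauctions-1$ rather than $\numauctions$, and that auctions counted in the first block are not counted again. To handle this, we will index auctions by their strictly increasing \auctionid and use the per-slot overlap bound at the instant the timer expires. At most $\numauctions$ auctions are active at that instant, and the one whose conclusion ends the timer is already included in the window's count.
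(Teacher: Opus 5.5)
Your proposal is correct and follows essentially the same argument as the paper: $\numauctions$ unrevealed auctions trigger $\makeOffline()$, the $\numauctions$-auction timer keeps the deposit in $\varDeposits$ for $\numauctions$ further auctions, at most $\numauctions-1$ more auctions can still be running when the timer expires, and afterwards no valid $\memberproof$ (hence no certificate with finite median) can be obtained, for a total of $3\numauctions-1$. Your extra details (the $\varPerSlot$ overlap bound, ZK soundness, the median becoming $\infty$) only make explicit what the paper asserts; you should still say, as the paper does, that existential honesty of $\ILproposers$ ensures $\makeOffline()$ is actually called and not merely callable, and note that both arguments tacitly assume timestampers check $\memberproof$ against a current $\rootslot$, whereas in \Cref{alg:timestamp} the slot $\slot$ is supplied by the user.
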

\begin{proof}
    Let a user's last revealed auction id be $\auctionid_\ell$. If the user does not participate for $\numauctions$ auctions, at least honest IL Proposer will call $\textsc{makeOffline}()$. This starts a $\numauctions$-auction countdown, for which the user remains active (and thus can get timestamp certificates). When the last of these auctions is about to end, there can be at most $\numauctions-1$ more auctions started and still be running. A valid certificate can be generated for all of these $3\numauctions-1$ auctions; however, as soon as the last auction ends, the user's deposit becomes inactive and no other $\handle$ can be generated with a valid $\memberproof$.
\end{proof}

\begin{lemma}[Uniqueness of handle per deposit]\label{lem:phaseB-uniqueness}
  For each deposit (as indicated by $C_\user$) and auction $\auctionid$, the user can create at most one handle $\handle$, such that $\memberproof$ is a verifying membership proof for the statement $(\rootslot, \handle, \commitment)$.
\end{lemma}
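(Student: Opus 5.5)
The argument reduces to two facts: the handle is a deterministic function of the witness, and the proof system is sound. Fix a deposit, identified by $C_\user$, and an auction $\auctionid$. Suppose two handles $\handle$ and $\handle'$ each come with a verifying membership proof, for statements $(\rootslot, \handle, \commitment)$ and $(\rootslot', \handle', \commitment')$, where the roots may come from different slots. We want to show $\handle = \handle'$.

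\textbf{Step 1: extract witnesses.} By soundness (knowledge soundness) of the zero-knowledge proof system, except with negligible probability each verifying proof yields a witness. For the first statement this is $(\secret, \Addr_\user, \varPath, \tx)$ with $\handle = \textsc{Hash}(\secret \parallel \auctionid)$ and $C_\user = \textsc{Hash}(\secret)$; for the second it is $(\secret', \Addr_\user, \varPath', \tx')$ with $\handle' = \textsc{Hash}(\secret' \parallel \auctionid)$ and $C_\user = \textsc{Hash}(\secret')$.

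\textbf{Step 2: the secret is determined by $C_\user$.} Both witnesses open the same deposit commitment, so $\textsc{Hash}(\secret) = \textsc{Hash}(\secret') = C_\user$. If $\secret \neq \secret'$, this is a collision in \Hash, which occurs only with negligible probability under collision resistance. Hence $\secret = \secret'$.

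\textbf{Step 3: the handle is determined by the secret.} Because $\handle = \textsc{Hash}(\secret_\user \parallel \auctionid)$ is a deterministic function of $(\secret_\user, \auctionid)$, Step 2 gives $\handle = \handle'$. The bid commitment $\commitment$ may still differ between the two statements, but that is exactly the case caught by $\signCom()$ and $\slashUsr()$; it does not produce a second handle.

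\textbf{Main obstacle.} The difficulty lies in stating precisely what the membership statement binds. As written in \Cref{sec:timestamping}, $\auctionid$ is part of the witness, and $\memberproof$ "does not prove anything about it". A cheating user could therefore use a different auction ID inside the same proof and obtain a distinct handle that still verifies. To close this gap, I would argue that the handle is tied to $\auctionid$ in Phase 2 through $\idproof$, which publicly fixes $(\handle, \auctionid, C_\user)$. Any second handle accepted for the same auction and deposit must then also equal $\textsc{Hash}(\secret_\user \parallel \auctionid)$, and by Step 2 (collision resistance) it must be the same handle. Accordingly, I would phrase the lemma as holding for handles usable in auction $\auctionid$, with all claims holding up to negligible probability under soundness of the proof system and collision resistance of \Hash.
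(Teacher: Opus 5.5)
Your argument is correct and is essentially the paper's. Collision resistance of $\textsc{Hash}$ on $C_\user = \textsc{Hash}(\secret_\user)$ fixes the secret, and hence the handle $\textsc{Hash}(\secret_\user \parallel \auctionid)$. The paper argues contrapositively: a second handle needs $\secret'_\user \neq \secret_\user$, hence a $C'_\user$ not in the Merkle tree. It leaves soundness implicit, whereas you make extraction explicit via knowledge soundness, which is what actually turns a second handle into an efficiently found collision.

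Your caveat is well founded. The paper's proof silently writes $\handle' = \textsc{Hash}(\secret'_\user \parallel \auctionid)$ with the \emph{same} $\auctionid$, i.e., it assumes exactly the auction binding that $\memberproof$ does not enforce. So the lemma holds only under that reading or, as you suggest, for handles accepted with $\idproof$ in Phase~2. Note, however, that $\idproof$ does not link that handle to the one the timestamps were issued under, since the certificate signs only $(\commitment,\clocktime_i)$.
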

\begin{proof}
  Assuming that $\textsc{Hash}()$ is a collision-resistant hash function, a user \user cannot come up with $\secret'_\user \neq \secret_\user$, such that $\textsc{Hash}(\secret'_\user) = C'_\user$ and $C'_\user = C_\user$.
  Hence, \user would have to use $\secret'$ to create a valid membership proof \memberproof for a statement $(\rootslot, \handle', \commitment')$, where $\handle' = \textsc{Hash}(\secret'_\user \parallel \auctionid)$ and $\commitment' = \textsc{Hash}(\tx' \parallel C'_\user)$, for an entry $(\Addr_\user, C'_\user)$ that does not exist in the merkle tree \rootslot. This is impossible, except with negligible probability, because it would contradict the collision resistance of $\textsc{Hash}()$, used to implement the Merkle tree.
\end{proof}

\subsection{Proofs for the Entire Construction}

\censorshipresistance*

\begin{proof}
(Censorship Resistance) For every honest user $\user \in \userset$, let $\tx$ be a bid transaction created by $\user$ for auction $\auctionid$ and sent to timestampers before clock time $\tend - \delta_i$ (i.e., $\clocktime_\user \leq \tend-(\gossipdelay+\syncdelay)$). From \Cref{lem:phaseB-completeness}, the honest user can always create such a bid.

By \Cref{lem:median-robustness}, \emph{User Drift Bound}, we know that
\[
\max(\mathcal{T}_{\mathsf{hon}})\leq \clocktime_\user+\syncdelay+\gossipdelay.
\]
Also from \Cref{lem:median-robustness}, \emph{Honest Span}, we get that
\[
\medianTimestamp \leq \max(\mathcal{T}_{\mathsf{hon}}) \leq \clocktime_\user+\syncdelay+\gossipdelay \leq \tend,
\]
i.e., the median timestamp obtained has value at most $\tend$. Thus, if such a bid transaction with timestamp is sent to \ILproposers, it is still considered valid.
Then, $\tx$ is included in the input bid set of auction $\auctionid$ used by all honest IL Proposers when computing the auction outcome.

(Post Auction Exclusion) From \cref{lem:median-robustness}, if a transaction is generated by an adversary at time $t_u$, then the minimum timestamp it can receive is $\timestamp\geq\min(\mathcal{T}_{\mathsf{hon}})\geq t_u-\syncdelay$

If the bid is valid, i.e., $\timestamp \leq \tend$ then $t_u \leq \tend+\syncdelay$. Thus, no adversarial bids sent after $\syncdelay$ of auction end $\tend$ is valid, satisfying post-auction exclusion.
\end{proof}

\relaxedhiding*
\begin{proof} [Proof. Indistinguishability]
    Let us consider, towards a contradiction, that an auction protocol violates indistinguishability. We will show that using this auction's adversary as a subroutine, we can design an adversary that has an advantage in the decision game for the pre-image of a hash function or learning witness of a zero knowledge proof.

    Concretely, $\adversary$ outputs two transactions $(\tx_0,\tx_1)$ and is given a
    challenge commitment $\commitment_b = \Hash(\tx_b \parallel C_\user)$
    together with a ZK membership proof created as
    \[
      {\memberproof}_{,b} \gets \zkprove((\rootslot,\handle,\commitment_b),(\secret_\user,\Addr_\user,\varPath,\tx_b)),
    \]
    where $b\gets \{0,1\}$ is drawn uniformly at random.
    Before time $\thiding$, all information available to~$\adversary$ consists of:
    \begin{itemize}[leftmargin=*]
      \item the handle $\handle=\Hash(\secret_\user\parallel\auctionid)$,
      \item the commitment $\commitment_b$,
      \item the proof ${\memberproof}_{,b}$, and
      \item the timestamp certificate $(\medianTimestamp,\certificate)$,
    \end{itemize}
    none of which reveals $\tx_b$ directly.
    i) Since $\handle$ is independent of $\tx_b$, there is no advantage that the adversary gains from this information. By hash function's pre-image resistance, $\secret_\user$ and $\auctionid$ remain hidden with no advantage more than random guessing.
    ii) the timestamp certificate $(\medianTimestamp,\certificate)$ contains only timestamps and signatures. Again no information about $\tx_b$ is available here.

    By assumption, the adversary outputs a guess~$b'$ such that
    \[
      \left|\Pr[b'=b] - \tfrac{1}{2}\right| \geq \epsilon(\lambda)
    \]
    for some non-neglibile function $\epsilon$.

    Now, for the commitment $\commitment_b = \Hash(\tx_b \parallel C_\user)$, consider an adversary $\adversary_1(\adversary)$, such that,
    when $\adversary$ outputs two messages $(\tx_0,\tx_1)$, the adversary $\adversary_1$ outputs $(m_0 = (\tx_0\parallel C_\user), m_1 =(\tx_1 \parallel C_\user))$. The challenger to the pre-image resistance function returns $\Hash(m_b)$ and this value is passed to $\adversary$. When $\adversary$ outputs $b$, $\adversary_1$ outputs the same $b$.
    Since $\adversary_1$ must not have any advantage (by definition of $\Hash()$ being pre-image resistant), $\adversary$ can also not obtain any advantage through the commitment $\commitment_b$.

    Lastly, for the proof ${\memberproof}_{,b}$, consider an adversary $\adversary_2(\adversary)$. When $\adversary$ outputs two messages $(\tx_0,\tx_1)$, adversary $\adversary_2$
    chooses a random $b \gets \{0,1\}$ and constructs the corresponding
    commitment
    \[
        \commitment_b = \Hash(\tx_b \parallel C_\user).
    \]
    It then submits the ZK statement $(\rootslot,\handle,\commitment_b)$ to
    its ZK challenger, receiving either a real or simulated proof
    $\pi^*$ depending on the hidden bit $\beta$.
    $\adversary_2$ now continues the protocol with
    ${\memberproof}_{,b} = \pi^*$ and sends to $\adversary$ the tuple
    \[ (\handle,\commitment_b,{\memberproof}_{,b},(\medianTimestamp,\certificate)).
    \]
    Adversary $\adversary$ outputs $b'$. If $b = b'$, $\adversary_2$ outputs $\beta = $ simulated. Otherwise, output $\beta = $ real.
    However, since $\adversary_2$ must have no advantage in the zero-knowledge game, $\adversary$ cannot have an advantage in the indistinguishability game.
\end{proof}
\begin{proof} [Existential Obfuscation within $\textsc{AucSet}(\medianTimestamp)$]
   Let us consider, towards a contradiction, that this advantage is not neglible.
     Before time $\thiding$, all information available to~$\adversary$ consists of:
    \begin{itemize}[leftmargin=*]
      \item the handle $\handle^b=\Hash(\secret_\user\parallel\auctionid_b)$,
      \item the commitment $\commitment = \Hash(\tx \parallel C_\user)$,
      \item the proof ${\memberproof}_{,b}$, and
      \item the timestamp certificate $(\medianTimestamp,\certificate)$,
    \end{itemize}
    The proof follows the same arguments as \emph{indistinguishability}. Only information  $\handle^b=\Hash(\secret_\user\parallel\auctionid_b)$ and the proof ${\memberproof}_{,b}$ are dependent on the auction ID. The timestamping-certificate time $\medianTimestamp$ only reveals that $\auctionid_b \in \textsc{AucSet}(\medianTimestamp)$ and thus the adversary $\adversary$ can be used to build adversaries that have an advantage in learning a pre-image of a hash function or learning witness of a zero knowledge proof.
\end{proof}
\begin{proof} [User obfuscation within $\userset$]
    Before time $\thiding$, all information available to~$\adversary$ consists of:
    \begin{itemize}[leftmargin=*]
      \item the handle $\handle^b=\Hash(\secret_b\parallel\auctionid)$,
      \item the commitment $\commitment_b = \Hash(\tx \parallel C_b)$,
      \item the proof ${\memberproof}_{,b}$, and
      \item the timestamp certificate $(\medianTimestamp,\certificate)$,
    \end{itemize}
    Here, the timestamp certificate $(\medianTimestamp,\certificate)$ reveals only that whoever submitted the transaction is in $\userset$. The other three elements are all similar to previous proofs: if advantage is obtained from either the handle $\handle^b$ or $\commitment_b$, then we can make an adversary with an advantage in pre-image of a hash function, and if the advantage is obtained from proof ${\memberproof}_{,b}$, then we can make an adversary with an advantage in learning witness of a zero knowledge proof.
\end{proof}

\ape*
\begin{proof} [Informal Proof]
    If the user is active and honest, then after depositing, the user never needs to make an on-chain call. This can be done by revealing its bids for auctions it participates in, and by submitting a proof of non-participation immediately if it does not. Even if a malicious IL Proposer calls $\makeOffline()$, an honest IL proposer with its history will always call $\updateActivity()$ before the user is ever made inactive.
    Thus, the only time the user pays a fee is when it registers its deposit, and thus the amortized cost over many auctions that the user participates but does not win in, will tend to $0$.
\end{proof}

\binding*
\begin{proof}
  
  Consider that a bidder decides not to reveal a bid $\tx$ after receiving a timestamp certificate for some auction id $\auctionid$. The handle for this auction is unique, as proved in~\Cref{lem:phaseB-uniqueness}, and was used to get the timestamp for $\tx$. Thus, creating a proof of non-participation for \handle would double-use $\handle$, causing $\textsc{slash}()$ to be called on-chain. Thus, this would imply that the user can never generate a proof of non-participation for such an $\auctionid$. Thus, it can never withdraw its deposit from the contract loosing its deposit.

  However, the bidder can do so for at most $n_c = 3\numauctions-1$ auctions (from~\Cref{lem:max-inactivity}) and thus the amortized cost for not revealing any valid winning bid is at least $\tfrac{\varMinDeposit}{n_c}$, which is at least $u$, per the theorem statement.
\end{proof}

\end{document}